\newtheorem{theorem}{Theorem}[section]
\newtheorem{lemma}[theorem]{Lemma}
\newtheorem{meta-theorem}[theorem]{Meta-Theorem}
\newtheorem{definition}[theorem]{Definition}
\newcommand{\FullOrShort}{full}
  \newcommand{\fullOnly}[1]{#1}
  \newcommand{\shortOnly}[1]{}
    \newcommand{\fullOnly}[1]{}
    \newcommand{\shortOnly}[1]{#1}
\definecolor{darkgreen}{rgb}{0,0.5,0}
\crefname{theorem}{Theorem}{Theorems}
\Crefname{lemma}{Lemma}{Lemmas}
\Crefname{observation}{Observation}{Observations}
\Crefname{equation}{}{}
\algnewcommand\algorithmicswitch{\textbf{switch}}
\algnewcommand\algorithmiccase{\textbf{case}}
\crefname{theorem}{Theorem}{Theorems}
\Crefname{lemma}{Lemma}{Lemmas}
\Crefname{observation}{Observation}{Observations}
\Crefname{equation}{}{}
\newcommand{\eps}{\varepsilon}
\newcommand{\card}[1]{\left| #1 \right|}
\newcommand{\eqdef}{{\stackrel{\rm def}{=}}}
\newcommand{\poly}{{\rm poly}}
\newcommand{\floor}[1]{\left\lfloor #1 \right\rfloor}
\newcommand{\ceil}[1]{\left\lceil #1 \right\rceil}
\providecommand{\E}{{\rm \mathbb{E}}}
\providecommand{\Var}{{\rm \mathbb{V}ar}}
\providecommand{\Cov}{{\rm Cov}}
\newcommand*{\mpc}{\textsf{MPC}}
\newcommand*{\local}{\textsf{LOCAL}}
\newcommand*{\congest}{\textsf{CONGEST}}
\renewcommand{\paragraph}[1]{\vspace{0.15cm}\noindent {\bf #1}:}
\DeclarePairedDelimiter{\abs}{\lvert}{\rvert}
\let\oldabs\abs
\def\abs{\@ifstar{\oldabs}{\oldabs*}}
\date{}
\title{\huge Massively Parallel Ruling Set Made Deterministic}
\author{
    Jeff Giliberti\\
    \small{University of Maryland, USA}\\
    \small{\texttt{jeffgili@umd.edu}}
    \and
    Zahra Parsaeian\\
    \small{University of Freiburg, Germany}\\
    \small{\texttt{zahrap@cs.uni-freiburg.de}}
    \and
}
\date{}
\begin{document}
\maketitle

\setcounter{page}{0}
\thispagestyle{empty}
\begin{abstract}
    We study the \textit{deterministic complexity} of the $2$-Ruling Set problem in the model of Massively Parallel Computation (MPC) with linear and strongly sublinear local memory. 

    \textbf{Linear MPC}: We present a constant-round deterministic algorithm for the $2$-Ruling Set problem that matches the randomized round complexity recently settled by  Cambus, Kuhn, Pai, and Uitto~[DISC'23], and improves upon the deterministic $O(\log \log n)$-round algorithm by Pai and Pemmaraju~[PODC'22]. 
    Our main ingredient is a simpler analysis of CKPU's algorithm based solely on bounded independence, which makes its efficient derandomization possible.

    \textbf{Sublinear MPC}: We present a  deterministic algorithm that computes a $2$-Ruling Set in $\tilde O(\sqrt{\log n})$ rounds deterministically. Notably, this is the first deterministic ruling set algorithm with sublogarithmic round complexity, improving on the $O(\log \Delta + \log \log^* n)$-round complexity that stems from the deterministic MIS algorithm of Czumaj, Davies, and Parter~[TALG'21]. Our result is based on a simple and fast randomness-efficient construction that achieves the same sparsification as that of the randomized $\tilde O(\sqrt{\log n})$-round LOCAL algorithm by Kothapalli and Pemmaraju~[FSTTCS'12]. 
\end{abstract}
\newpage
\section{Introduction}
In this paper, we present faster deterministic parallel algorithms for finding $2$-ruling sets. 
Given an $n$-vertex $m$-edge graph $G = (V, E)$ and an integer $\beta \ge 1$, the more general problem of $\beta$-ruling sets consists of finding a subset $S \subseteq V$ of non-adjacent vertices such that each vertex $v \in V \setminus S$ is at most $\beta$ hops away from some vertex in $S$. Thus, a $\beta$-ruling set is also a $\beta+1$ ruling set. This concept serves as a natural generalization of one of the most central and well-studied problems in distributed graph algorithms, known as  \textit{Maximal Independent Set} (MIS), which corresponds to a $1$-ruling set. Generally, for $\beta \ge 1$, the complexity of a $\beta$-ruling set reduces as the value of $\beta$ increases. 

We design $2$-ruling set algorithms for the model of Massively Parallel Computation (\mpc) in the strongly sublinear and linear memory regimes. 
The study of $2$-ruling sets is motivated by its close relationship with MIS, while still permitting the development of considerably faster algorithms.
Additionally, it is known that for problems utilizing MIS as a subroutine, a $\beta$-ruling set may serve as an alternative for some $\beta > 1$ \cite{BBKO22}.

\subparagraph{\mpc\ Model}
Initially introduced by \cite{KSV10} and later refined in \cite{ANOY13, BKS13, GSZ11}, this model is characterized by a set of $M$ machines each with memory $S$. The input is distributed across machines and the computation proceeds in synchronous rounds. Each round machines perform arbitrary local computation and all-to-all communication, sending and receiving up to $S$ words. The main goal is to minimize the number of communication rounds required by the algorithm. A second goal is to minimize the global space needed to solve the problem, i.e., the number of machines times the local memory per machine, which is $\Omega(n+m)$ for graph problems. In the linear regime of \mpc\ each machine is assigned local memory $S = O(n)$, while in the (strongly) sublinear regime of \mpc\ the local memory is $O(n^\alpha)$, for constant $0 < \alpha < 1$.

\subparagraph{Linear MPC} In the linear model of \mpc, a series of works showed that several fundamental problems such as $(\Delta+1)$-coloring \cite{CFG+19, CDP20} and minimum-spanning tree \cite{Now21} admit constant-round deterministic algorithms. Surprisingly, a recent work of \cite{CKPU23} provides a randomized $2$-ruling set algorithm with constant-round complexity improving on the $O(\log \log \log n)$ time algorithm by \cite{HPS14} and the $O(\log \log \Delta)$ time bound that stems from the MIS algorithm by \cite{GGKMR18}. On the deterministic side, \cite{PP22} gave an algorithm that computes a $2$-ruling set in $O(\log \log n)$ time, which improved on the $O(\log \Delta + \log \log^* n)$ round complexity due to the deterministic MIS algorithm of \cite{CDP21mis, CDP21lb, CDP24}.
Key challenges in this domain lie in determining the existence of deterministic algorithms achieving constant-round complexity for $2$-ruling sets and sublogarithmic-round complexity for MIS.

\subparagraph{Sublinear MPC} In the sublinear model of \mpc, the above $O(\log \Delta + \log \log^* n)$-round algorithm by \cite{CDP21mis, CDP21lb} is the fastest known for both MIS and $2$-ruling set. On the randomized side, \cite{GU19} show that MIS can be solved in $\tilde{O}(\sqrt{\log \Delta} + \log \log n)$ rounds and \cite{PP22} show that $2$-ruling set can be solved in $\tilde O(\log^{1/6} \Delta + \log\log n)$, where the $\tilde O(\cdot)$ notation hides $\poly \log (\cdot)$ factors. It may be worth noting that if we limit the global space to $\tilde O(n+m)$, then the fastest $2$-ruling set algorithm has $\tilde O(\log^{1/4} n + \log \log n)$ randomized complexity \cite{PP22} and $O(\log \Delta \log \log n)$ deterministic complexity \cite{CDP21mis, FGG23}.

\subparagraph{Other Related Work}
There is a large body of work studying ruling sets in the LOCAL model \cite{GV7, BHP12, HPS14, SEW13, BKP14, BEPS16}.
The most relevant to ours is the randomized LOCAL algorithm of \cite{KP12} for computing $2$-ruling sets that combined with \cite{Gha16} yields a LOCAL round complexity of $\tilde O (\sqrt{\log n})$.  
On the hardness side, in the LOCAL model, there is a lower bound for $2$-ruling set of $\Omega(\min\{\sqrt{\Delta}, \log_{\Delta}n\})$ deterministic rounds and of $\Omega(\min\{\sqrt{\Delta}, \log_{\Delta}\log n)$ randomized rounds by \cite{BBO22, BBKO22}, which, in terms of its proportion to $n$, are $\Omega(\frac{\log n}{\log \log n})$, and $\Omega(\frac{\log\log n}{\log\log \log n})$, respectively.
For MIS and maximal matching (MM), the best known deterministic lower bound is $\Omega(\min\{\Delta, \log_{\Delta}n\})$ by \cite{BBH+19}, and the best known randomized lower bounds are $\Omega(\min\{\Delta, \log_{\Delta}\log n\})$ by \cite{BBH+19} and $\Omega(\min\{\frac{\log \Delta}{\log \log \Delta}, \log_{\Delta}n\})$ by \cite{KMW16}, which, in terms of its proportion to $n$, are $\Omega(\frac{\log n}{\log \log n})$, $\Omega(\frac{\log\log n}{\log\log \log n})$, and $\Omega(\sqrt{\frac{\log n}{\log \log n}})$, respectively. Via the MPC conditional lower-bound framework by \cite{GKU19, CDP21lb}, these results give the following component-stable lower bounds for sublinear \mpc\ algorithms:
\begin{itemize}
    \item $\Omega(\log \log n)$ for deterministic $2$-ruling set, deterministic and randomized MIS and MM.
    \item $\Omega(\log \log \log n)$ for randomized $2$-ruling set.
\end{itemize} 

\subsection{Our Contribution}
We design improved deterministic algorithms for the problem of $2$-ruling set in the \mpc\ setting with linear and sublinear local memory. 

\subparagraph{Linear MPC Regime} We develop a deterministic algorithm that matches the constant-round complexity of \cite{CKPU23} and even its optimal global space usage.
\begin{theorem}\label{thm:linear-memory}
There is a $O(1)$-round linear \mpc\ algorithm that computes a $2$-ruling set deterministically using linear global space. 
\end{theorem}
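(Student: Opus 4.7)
The plan is to derandomize the constant-round randomized $2$-ruling set algorithm of Cambus, Kuhn, Pai, and Uitto~\cite{CKPU23}. Their algorithm performs a constant number of sampling-and-sparsification phases, each of which uses per-vertex random choices to shrink the residual instance until it fits on a single machine and can be solved locally. The obstacle to a direct derandomization is that the original correctness analysis appears to rely on concentration that requires near-full independence of the sampling bits, which is too expensive to obtain in $O(1)$ \mpc\ rounds. As hinted in the abstract, the strategy is to replace that analysis with one that succeeds under only $k$-wise independent sampling for some constant $k$, and then derandomize via a small-bias hash family.

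The central step is therefore analytic. For each phase of CKPU, I would re-express the desired success event as a bound on a low-degree polynomial of the sampling bits, typically a sum of indicator variables over a vertex's short-range neighborhood. Chernoff-style bounds are replaced by constant-moment arguments (Chebyshev-type inequalities suffice in most places, with a higher constant moment used where stronger tail control is needed). Because the resulting pessimistic estimator is a polynomial of constant degree, its expectation under a $k$-wise independent distribution equals its expectation under full independence once $k$ exceeds this degree, so the CKPU guarantees transfer unchanged up to constants. A union bound over the $n$ vertices and the $O(1)$ phases then yields a deterministic condition on the random seed that, when met, certifies success of the entire algorithm.

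With this bounded-independence analysis in hand, the derandomization is standard. I would instantiate the sampling bits of each phase using the $k$-wise independent hash family given by degree-$k$ polynomials over $\Fq$, which has seed length $O(k \log n) = O(\log n)$ and a sample space of size $\poly(n)$. A good seed for each phase is then identified by evaluating the pessimistic estimator over this space using the conditional-expectation-style \mpc\ derandomization framework developed in the line of work on $(\Delta+1)$-coloring~\cite{CFG+19, CDP20}: the seed is fixed in $O(1)$ blocks, and for each block the conditional expectation is computed over all $\poly(n)$ extensions in parallel across the machines. Since the estimator decomposes into per-vertex contributions depending only on $O(1)$-hop information, each block is fixed in $O(1)$ \mpc\ rounds and the global space remains $O(n+m)$. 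Composing the $O(1)$ phases gives \Cref{thm:linear-memory}.

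The main obstacle will be the analytic step. Every probabilistic statement in CKPU must be matched by a pessimistic estimator that is simultaneously a constant-degree polynomial in the sampling bits, a valid upper bound on the phase failure probability, and decomposable into per-vertex terms so that conditional expectations are computable in $O(1)$ linear \mpc\ rounds. Particular care is required at the interfaces between phases, where the graph handed to a later phase depends on the seeds chosen in earlier phases; one must ensure that conditioning on the success of the earlier phases does not destroy the bounded-independence properties needed later, for instance by using a fresh $k$-wise independent family per phase and folding the between-phase dependencies into the analytic union bound rather than into the hash construction.
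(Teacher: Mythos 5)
Your high-level plan matches the paper's: derandomize CKPU via bounded independence plus the distributed method of conditional expectations, with $O(\log n)$-bit seeds and $O(1)$ phases. The derandomization machinery in your third paragraph is exactly the standard framework the paper invokes. However, the central analytic claim in your second paragraph does not hold as stated and papers over the main technical difficulty.

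You assert that once the pessimistic estimator is a constant-degree polynomial, its expectation under $k$-wise independence equals that under full independence, ``so the CKPU guarantees transfer unchanged up to constants,'' and then propose ``a union bound over the $n$ vertices.'' This is where the argument breaks. CKPU's correctness rests on $1-1/\poly(n)$ concentration for events such as ``a node of degree $d$ has $\Theta(\log n)$ sampled neighbors,'' and that concentration is obtained with full-independence Chernoff bounds. Under constant $k$-wise independence the moment bound (\Cref{lem:kwise_bound}) only gives failure probability $1/\poly(\mu)$ where $\mu$ is the expectation, and in CKPU's formulation $\mu = \Theta(\log n)$, so the per-vertex failure probability is only polylogarithmically small. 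A union bound over $n$ vertices (or over the $\Theta(\log\Delta)$ degree classes) therefore does not go through, and the CKPU guarantees do \emph{not} transfer unchanged. The paper's key move, which your proposal does not identify, is to \emph{modify the algorithm}: redefine ``good'' nodes to have $d^{\Omega(1)}$ expected sampled neighbors (rather than $\Theta(\log n)$), so the $k$-wise moment bound gives failure probability $1/\poly(d)$; then replace union bounds by linearity of expectation, showing that the set of uncovered nodes contributes only $O(n)$ edges and shrinks by a $d^{\Omega(1)}$ factor per degree class (\Cref{lem:good_nodes_not_ruled}–\Cref{lem:first_sampling}, \Cref{lem:bad_nodes_counting}–\Cref{lem:global_progress}).

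A second missing ingredient is how to drive the conditional-expectation search when the guarantee must hold for every one of the $O(\log\Delta)$ degree classes \emph{simultaneously}: the derandomization can minimize only one objective. The paper resolves this with a single weighted pessimistic estimator $Q = \sum_d X_d\, d^{\varepsilon/2}/|\overline{B}_d|$ (\Cref{lem:bad_nodes_mis_grouped}), whose $O(1)$ expectation certifies $X_d \le |\overline{B}_d|/d^{\Omega(1)}$ for all $d$ at once. Your proposal gestures at decomposable per-vertex estimators, but without this reweighting across degree classes (or a comparable device) the conditional-expectation argument has no single scalar to optimize, and the phase cannot be derandomized in $O(1)$ rounds. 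In short, the framework in your proposal is right, but the two concrete steps that make bounded independence actually suffice — relaxing the good-node threshold from $\Theta(\log n)$ to $d^{\Omega(1)}$, and the degree-class-aggregated pessimistic estimator — are the crux of the paper's proof and are absent from yours.
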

Prior to our work, the best known deterministic complexity was $O(\log \log n)$ by a result of \cite{PP22}.
Our algorithm (\Cref{sec:linear}) is obtained by derandomizing the $O(1)$-round algorithm of \cite{CKPU23}. While the derandomization framework of our algorithm has been applied successfully to numerous \mpc\ graph problems \cite{CPS17, CDP20, CDP21col, CDP21mis, CC22, FGG22, FGG23, PP22}, the main challenge lies in analyzing (a slight variation of) \cite{CKPU23}'s algorithm under limited independence, as we overview later in \Cref{sec:overviewlinear}.

\subparagraph{Sublinear MPC Regime} We design the first deterministic sublogarithmic algorithm for finding a $2$-ruling set when the memory per machine is strictly sublinear. 

\begin{theorem}\label{thm:sublinear-memory}
    There is a deterministic sublinear \mpc\ algorithm that finds a $2$-ruling set in $O(\sqrt{\log \Delta} \cdot \log \log \Delta + \log \log^* n)$ rounds using $O(n^{1+\varepsilon}+m)$ global space, for any constant $\varepsilon > 0$. Moreover, the same algorithm runs in $O(\sqrt{\log \Delta} \cdot \log \log n)$ using global space $O(n+m)$.
\end{theorem}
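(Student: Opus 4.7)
The plan is to derandomize the Kothapalli--Pemmaraju sparsification \cite{KP12} and hand off to the deterministic MIS algorithm of \cite{CDP21mis} on the residual low-degree graph. I would set $T := \Theta(\sqrt{\log \Delta})$ and run $T$ sparsification phases. In phase $i$, every active vertex $v$ of current degree $\Delta_i$ is marked with probability $p_i \approx \Delta_i^{-1/T}$; since the marked subgraph has degree only $O(\Delta_i^{1-1/T})$, a $2$-ruling set $I_i$ of it can be extracted in one \mpc\ round by collecting each marked neighborhood onto a single machine. We then add $I_i$ to the output and peel off $I_i \cup N_{\le 2}(I_i)$ from the graph. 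A moment-based tail bound shows that every surviving vertex now has degree at most $\Delta_{i+1} := \Delta_i^{1-1/T}$. After $T$ phases $\Delta_T = O(1)$, and the \cite{CDP21mis} algorithm produces a $2$-ruling set of the remainder in $O(\log\log^* n)$ further rounds on the constant-degree residual.

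\textbf{Limited independence.} The concentration steps of \cite{KP12} concern sums of $0/1$ indicators and their threshold behavior, which tolerate $k$-wise independence for $k = \mathrm{poly}\log\Delta$ via Bellare--Rompel / Schmidt--Siegel--Srinivasan-style tail inequalities. I would therefore replace each phase's fresh coin flips by a hash function drawn from a $k$-wise independent family of seed length $O(k \log n) = O(\mathrm{poly}\log\Delta \cdot \log n)$, which is short enough to be derandomized over in \mpc.

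\textbf{\mpc\ derandomization.} With the short seed in hand, each phase can be derandomized via the standard pipeline of \cite{CPS17, CDP20, CDP21col, CDP21mis}: exhibit a pessimistic estimator $\Phi$ upper-bounding the expected number of ``bad'' events (surviving vertices whose post-phase degree exceeds $\Delta_{i+1}$), show that $\Phi$ decomposes additively into per-vertex or per-pair contributions that fit within $n^\varepsilon$ local memory, and then fix seed bits one chunk at a time while preserving $\Phi < 1$. With global space $O(n^{1+\varepsilon}+m)$, $n^\varepsilon$-way branching per step finishes each phase in $O(\log \log \Delta)$ rounds; with only $O(n+m)$ global space, binary search over the seed gives $O(\log \log n)$ rounds per phase. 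Multiplying by $T = \Theta(\sqrt{\log \Delta})$ and adding the $O(\log \log^* n)$ clean-up yields the two bounds claimed in \Cref{thm:sublinear-memory}.

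\textbf{Main obstacle.} The hard part will be designing a pessimistic estimator that is simultaneously (i) tight enough under $k$-wise (rather than full) independence so that its initial value stays below $1$ despite the looser tail bounds, and (ii) additively decomposable into pieces small enough to fit in $n^\varepsilon$ memory, so that the conditional expectations can be evaluated in parallel across machines. A secondary subtlety is propagating $I_i \cup N_{\le 2}(I_i)$ and the updated per-vertex degrees to all machines in a single \mpc\ round between consecutive phases; this is a standard sublinear \mpc\ aggregation, but it must be carried out without inflating the global-space budget beyond the stated bound.
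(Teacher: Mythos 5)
Your high-level plan---derandomize the KP12 sparsification, feed the residual low-degree graph into a deterministic MIS subroutine---matches the paper's, but the proposal misses the technical obstacle that the paper considers central and therefore the round bound does not follow from what you wrote.

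The sticking point is seed length. You propose sampling with $k$-wise independent hash functions with $k = \mathrm{poly}\log\Delta$, giving a seed of length $\Theta(k\log n) = \mathrm{poly}\log\Delta\cdot\log n$ bits. Fixing seed bits by the method of conditional expectation, whether via $n^{\varepsilon}$-way branching or binary search, fixes $O(\log n)$ bits per \mpc\ round (branching cannot fix more than $O(\log n)$ bits per round because the number of candidate prefixes must stay within the global space). Hence derandomizing one phase under your scheme costs $\mathrm{poly}\log\Delta$ rounds, not $O(\log\log\Delta)$; summed over $T = \Theta(\sqrt{\log\Delta})$ phases this is superlogarithmic, missing the target. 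The paper makes precisely this point in its overview (``this one-step process appears to require $\Omega(\log\Delta/\log f)$ many rounds''). The paper's fix is twofold and both ingredients are absent from your proposal: (a) replace one sampling step of probability $\approx f\log n/\Delta$ by $O(\log\log\Delta)$ gentler sub-steps each reducing degrees by a $\sqrt{\Delta'}$-factor, so that the expected degree after each sub-step is $\Omega(\Delta'^{\,\Omega(1)})$ and $k=O(\log_{\Delta'} n)$ suffices for concentration (Lemmas~\ref{lem:bi-partite-small-degree}--\ref{lem:sparsification}); and (b) precompute a $O(\Delta^{6})$-coloring of $G^{2}$ (via Linial, with space accounting) and hash \emph{colors} rather than IDs, shrinking the hash domain to $\mathrm{poly}(\Delta)$ so that a $k$-wise family with $k=O(\log_{\Delta} n)$ has seed length $O(\log_{\Delta} n\cdot\log\Delta)=O(\log n)$ and can be fixed in $O(1)$ rounds. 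This seed-shortening is the reason each phase costs $O(\log\log\Delta)$ rounds, not your claimed conditional-expectation over a $\mathrm{poly}\log\Delta\cdot\log n$-bit seed.

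Two secondary issues. First, the recurrence you state, $\Delta_{i+1}=\Delta_i^{1-1/T}$ with $T=\Theta(\sqrt{\log\Delta})$, does not reach $O(1)$ after $T$ phases---$(1-1/T)^{T}\to 1/e$, so $\Delta_T\approx\Delta^{1/e}$. The KP12-style recurrence the paper uses is $\Delta_{i+1}\approx\Delta/f^{\,i+1}$ with $f=2^{\sqrt{\log\Delta}}$ and $\log_{f}\Delta=\sqrt{\log\Delta}$ phases, which does bottom out at $2^{O(\sqrt{\log\Delta})}$. Second, the graph passed to the final MIS has maximum degree $2^{O(\sqrt{\log\Delta})}$, not $O(1)$, so the final MIS contributes $O(\sqrt{\log\Delta}+\log\log^{*}n)$ rounds (with $O(n^{1+\delta}+m)$ space), or $O(\sqrt{\log\Delta}\cdot\log\log n)$ rounds with $O(n+m)$ space; your attribution of the $\log\log^{*}n$ term to a constant-degree cleanup is not how the paper's accounting works.
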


For $\Delta \gg \log^* n$, our algorithm gives an almost quadratic improvement over the runtime obtained using the MIS algorithm of \cite{CDP24}, and gets closer to the $\tilde O(\log^{1/6} \Delta + \log \log n)$ randomized complexity of \cite{KPP20}. It is worth noting that it matches the conditionally-optimal runtime of $\Omega(\log \log n)$ when $\Delta = O(2^{\log^2 \log n / \log \log \log n})$, even though, being it not component-stable, the lower bound does not apply.

This algorithm (\Cref{sed:sublinear}) is obtained by derandomizing the sparsification developed by \cite{KP12} for solving $2$-ruling sets in the LOCAL model. Specifically, we show that a randomized $O(1)$-LOCAL downsampling step can be carried out in only $O(\log \log \Delta)$ rounds deterministically in \mpc\ with strongly sublinear space per machine and optimal global space. To achieve that, we combine several well-established derandomization tools such as limited independence, the method of conditional expectation, and coloring for reducing seed length, as we discuss in \Cref{overviewsublinear}.

We also note that our techniques may be more general and apply to $\beta$-ruling sets for $\beta > 2$. Concretely, one may combine our result with the framework of \cite{BKP14} to obtain faster \mpc\ $\beta$-ruling sets algorithms. This direction is left for future work.

\subsection{2-Ruling Sets: Technical Overview} 
We present the main intuition behind the recent constant-round randomized algorithm by \cite{CKPU23} in the linear regime of \mpc\ and the randomized $\tilde O(\sqrt{\log n})$-round LOCAL algorithm by \cite{KP12}, which is also closely followed by subsequent works \cite{HPS14, KPP20, PP22}. Then, we provide an overview of our deterministic algorithms and the main ideas that lead to randomness-efficient analyses.

\subsubsection{Linear Memory Regime}\label{sec:overviewlinear}
\subparagraph{Randomized Constant-Round Algorithm} The constant-round $2$-ruling set algorithm by \cite{CKPU23} relies on computing an MIS iteratively on subgraphs of linear size, which can be solved locally on a single machine. Their algorithm samples each vertex $v$ from $V$ and includes it in $V_{\text{samp}}$ independently with probability $1/\sqrt{\deg(v)}$. This sampling primitive is shown to give two useful structural properties, with high probability. First, the induced subgraph $G[V_{\text{samp}}]$ has a linear number of edges. Second, a certain MIS computation on $G[V_{\text{samp}}]$ returns an independent set that is at distance at most two from all but at most $n/\sqrt{d}$ vertices with degree $[d, 2d)$ in the original graph $G$, for each $d \in \{2^{\lfloor\log \Delta\rfloor}, 2^{\lfloor\log \Delta\rfloor - 1},\ldots, \Omega(1)$\}. Then, it is shown that, after two repetitions, the number of remaining edges for each degree class $d$ is at most $n/\poly(d)$, which sums up to $O(n)$ over all $d$'s.

Their analysis of the above sampling process relies on full independence in the sense that random decisions of any node influence its neighbors at distance at most three. Then, each node influences only up to $n^{3 \alpha}$ many nodes by assuming that any node has degree at most $n^{\alpha}$, for constant $\alpha > 0$. This property is exploited to union bound over large sets of independent nodes in $G^7$, since nodes at distance 8 are enough far apart not to influence one another. Clearly, this property breaks apart under our constraint of limited independence and requires to analyze the sampling process differently.

\subparagraph{Constant-Round Derandomization}
In a nutshell, we show that the same asymptotic guarantees as that provided by the above randomized algorithm can be achieved deterministically.
While it is easy to show that their initial sampling step gives a subgraph with a linear number of edges in expectation, even under pairwise independence, the main challenge is to prove that only $n/d^{\Omega(1)}$ nodes survive across all $O(\log \Delta)$ $d$-degree classes, simultaneously. Establishing the same polynomial decrease (in $d^{\Omega(1)}$) of the size of each $d$-degree class ensures the same constant-round complexity. 

Our key modification to \cite{CKPU23}'s analysis is to increase the threshold for a node to be called good. We say that a node of degree $d$ is good if it has at least $d^{\Omega(1)}$ neighbors in $G[V_{\text{samp}}]$, as opposed to the $\Theta(\log n)$ requirement of \cite{CKPU23}. This leads to the following two properties.
 
First, in the sampling step, we prove that each good node of degree $d$ is covered with probability $1 - 1/\poly(d)$ and that suffices. In fact, through the method of conditional expectation, non-covered nodes will induce at most $O(n)$ edges.

Second, in the MIS step, we prove that remaining ``bad'' nodes are at most $n/d^{\Omega(1)}$ for each degree class, without any assumption on the maximum degree. To achieve that, we combine a pairwise independent MIS algorithm (similar to that of \cite{FGG23}) with a pessimistic estimator that notably expresses the progress made over \textit{all} degree classes as a \textit{single} expectation. This expectation can then be obtained by means of standard derandomization tools.

\subsubsection{Strongly Sublinear Memory Regime}\label{overviewsublinear}
\subparagraph{Randomized $2$-Ruling Set Sparsification} 
The central step of the $2$-ruling set algorithms by \cite{KP11, KPP20} is a sparsification procedure that returns a subgraph $G'$ of sufficiently small maximum degree. Then, computing a maximal independent set on $G'$ has time proportional to its maximum degree, and yields a $2$-ruling set that covers all vertices in $G$ which have a neighbor in $G'$. 

They construct a subgraph $G'$ of maximum degree $O(f \cdot \log n)$ such that any (high-degree) node with a degree in $[\Delta, \Delta/f]$ in $G$ has a neighbor in $G'$, for some parameter $f \ge \log n$. It is easy to see that sampling each vertex $v \in V$ with probability $f \cdot \log n / \Delta$ independently ensures that every vertex with degree at least $\Delta/f$ will have a sampled vertex in its neighborhood with high probability. 

We just focused solely on covering vertices with degrees in $[\Delta, \Delta/f]$. It turns out that, by each time removing the subgraph $G'$ and its neighbors, the same sampling step can be repeated $O(\log_f \Delta)$ times, where in the $j$-th step nodes with degrees in $[f^{\log_f \Delta - (j-1)}, f^{\log_f \Delta - j}]$ are covered, with $j \in [\log_f \Delta]$. This simple process leads to a randomized round complexity of $O(\log f + \log_f \Delta + \poly\log\log n)$ by applying any MIS algorithm that runs in $O(\log \Delta + \poly\log\log n)$ rounds \cite{Gha16, GU19} on the union of all subgraphs, which have no conflicts by construction. Then, $f = 2^{\sqrt{\log \Delta}}$ is chosen to achieve a runtime of $O(\sqrt{\log \Delta} + \poly\log \log n)$. 

\subparagraph{Deterministic $2$-Ruling Set Sparsification} 
Our goal is to replace the above randomized sampling with a deterministic sampling that returns a subgraph $G'$ with the same properties as those returned by the above construction \cite{KP11, KPP20}. We slightly alter the sampling guarantees to allow for a relaxed maximum degree in $G'$ of up to $\poly(f)$ instead of $O(f \log n)$. Instead of sampling each vertex with probability $f \cdot \log n / \Delta$ randomly and independently in a single round, we sample them in a deterministic manner in $O(\log \log \Delta)$ rounds. The way in which we design this deterministic sampling step is explained next.

The standard approach is to limit the randomness by sampling vertices using a carefully selected $k$-wise independent hash function. A naive implementation that samples vertices with probability $\frac{\poly(f)}{\Delta}$ would need a family of $k$-wise independent hash functions with $k = \Omega(\log_f n)$, since each vertex has $\poly(f)$ expected sampled neighbors. The need for $\Omega(\log_f n)$-wise independence results in a seed of length $\Omega(\log_f n \cdot \log \Delta)$. Since in $O(1)$ MPC rounds only $O(\log n)$ bits can be fixed, this one-step process appears to require $\Omega(\frac{\log \Delta}{\log f})$ many rounds\footnote{Here, shortening the seed length using a family of $\eps$-approximate $k$-wise independent hash functions still requires $\omega(1)$ \mpc\ rounds.}, which is very far from being sublogarithmic.

Our approach to make this construction randomness-efficient relies on breaking down the sampling process into $O(\log \log \Delta)$ sub-sampling processes, each of which has weaker guarantees but requires only $O(1)$ rounds. 
In particular, the basis of our process is a simple, deterministic, constant-round routine that decreases the maximum degree by a $O(\sqrt{\Delta})$-factor, while ensuring that the maximum-to-minimum degree ratio of $O(f)$ is maintained, i.e., each vertex $v$ has degree roughly $|N_G(v)| / \sqrt{\Delta}$ in $G'$. 

Then, we repeatedly apply this degree-reduction routine to sparsify the neighborhoods of high-degree vertices until their degree drops to $2^{O(\log f)}$. It is easy to see that this requires at most $O(\log \log \Delta)$ repetitions. However, in each iteration, some downsampled neighborhoods may deviate from their expectation, say by an $\epsilon$-factor. Such deviation is amplified each time, resulting in a potential error of $\epsilon^{O(\log \log \Delta)}$. Nevertheless, through a suitable $f$ and $\epsilon$, we can minimize the error and show that the subgraph $G'$ has $\poly(f)$ maximum degree. Therefore, we can iterate through the $O(\log_f \Delta)$ degree classes (as in the randomized case) and apply our deterministic degree reduction to achieve the same result, up to a $O(\log \log \Delta)$ factor.

\subparagraph{Further Comparison} Several sparsifications for MIS and $2$-ruling sets in \local\ and low-memory \mpc\ have been studied. We include a brief comparison with the works of \cite{CDP21mis, MPU23, KPP20}.

A deterministic $O(1)$-round sampling process appeared in the MIS algorithm of \cite{CDP21mis}. There, the goal is to reduce the maximum degree to at most $n^\epsilon$ while ensuring that the resulting subgraph maintains enough edges and the distribution of degrees is still representative of the original graph. They decrease the maximum degree by an $n^{\Omega(1)}$-factor for $O(1)$ times, until the desired bound is achieved. Since the expected new maximum degree is still on the order of $n^{\Omega(1)}$, concentration around the expectation can be achieved with $O(1)$-wise independence, and thus derandomized in $O(1)$ rounds. In contrast, in $2$-ruling set, the main challenge is to subsample the neighborhoods of nodes with degree $d \ll n^{\Omega(1)}$. In fact, applying a similar subsampling method would require $\Omega(\log_d n)$-wise independence and $\Omega(\frac{\log \Delta}{\log f})$ rounds, as explained in the paragraph above. Thus, while the method in \cite{CDP21mis} is effective for high-degree nodes with $d = n^{\Omega(1)}$, handling smaller degrees requires a different approach.

The ruling set algorithm of \cite{MPU23} introduces a \congest\ sparsification that runs in $O(\log^2 n)$ rounds and deals with $O(\log \Delta)$ degree classes. There, a single sampling step requires a seed of length $O(\log^2 n)$ as they require guarantees stricter than ours. Specifically, their sparsification must maintain a low diameter and ensure proper coverage. Although their derandomization is \congest-efficient, it would require $O(\log n)$ \mpc\ rounds, making it unsuitable to our setting.

Finally, we note that the faster randomized $2$-ruling set algorithm of \cite{KPP20} relies on (informally) performing graph exponentiation on a sparsified subgraph. This approach relies on fixing the randomness of future iterations in advance, which simplifies the process of speeding up algorithms in  \local. The main challenge in adapting this approach to a deterministic setting is that existing techniques are generally effective at derandomizing only $O(1)$ steps of an algorithm. They do not easily extend to derandomize algorithms that simulate $\Omega(1)$ randomized rounds locally on each single machine via graph exponentiation. Consequently, achieving the same speed up deterministically appears to require a novel approach.
\section{Preliminaries}
In our analyses, we will use the notation $\poly(\cdot)$ to refer to $(\cdot)^c$, for a constant $c > 0$ at the exponent that can be made arbitrarily large without affecting  asymptotic bounds.

\subparagraph{Primitives in MPC}
    We recall that basic computations can be performed in the \mpc\ model with strongly sublinear local memory in $O(1)$ rounds deterministically \cite{G99, GSZ11}. 
    
    Therefore, tasks such as computing the degree of each vertex, ensuring neighborhoods of all vertices are stored on single machines, and collecting certain subgraphs onto a single machine will be used as black-box tools.

\subparagraph{Derandomization Framework}
A rich and successful line of research has studied the derandomization of algorithms in the parallel and distributed setting. In the \mpc\ model, classic derandomization schemes using limited independence and the method of conditional expectation \cite{Lub93, MNN94}, can be augmented with the power of local computation and global communication to achieve the expected result in $O(1)$ rounds.

We will often use the concepts of $k$-wise independence and family of $k$-wise independent hash functions (see, e.g., \cite{MR95, Rag88}). Given a randomized process that works under $k$-wise independence, it is known how to construct a $k$-wise independent family of hash functions.
    \begin{lemma}[\cite{ABI86, CG89, EGL+98}]\label{lem:k-wise}
        For every $N, k, \ell \in \mathbb{N}$, there is a family of $k$-wise independent hash functions $\mathcal{H} = \{h : [N] \rightarrow \{0, 1\}^{\ell}\}$ such that choosing a uniformly random function $h$ from $\mathcal{H}$ takes at most $k(\ell + \log N) + O(1)$ random bits, and evaluating a function from $\mathcal{H}$ takes time $\poly(\ell, \log N)$ time.
    \end{lemma}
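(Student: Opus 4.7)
The plan is to use the classical polynomial-based construction over a finite field. Let $t = \max(\ell, \lceil \log_2 N \rceil)$ and work over the binary extension field $\mathbb{F}_{2^t}$, which exists for every $t \ge 1$ and whose elements are naturally identified with bit-strings in $\{0,1\}^t$. This lets me view $[N]$ as a subset of $\mathbb{F}_{2^t}$ (via the binary representation of each index) and identify $\{0,1\}^\ell$ with the first $\ell$ bits of an element of $\mathbb{F}_{2^t}$ (since $\ell \le t$, the truncation map $\pi:\mathbb{F}_{2^t}\to\{0,1\}^\ell$ is a uniform surjection with every fibre of size exactly $2^{t-\ell}$).

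Next, I would define the hash family $\mathcal{H}$ by choosing $k$ coefficients $a_0,\ldots,a_{k-1}\in\mathbb{F}_{2^t}$ independently and uniformly at random, setting $p(x)=\sum_{i=0}^{k-1}a_i x^i$ over $\mathbb{F}_{2^t}$, and letting $h(x)=\pi(p(x))$. Storing a hash function requires $k\cdot t$ bits, and since $t \le \ell + \log_2 N + O(1)$, this is at most $k(\ell + \log N) + O(1)$ random bits as required. Evaluating $h(x)$ reduces to polynomial evaluation in $\mathbb{F}_{2^t}$, which by Horner's rule costs $O(k)$ field multiplications and additions; each field operation over $\mathbb{F}_{2^t}$ takes $\mathrm{poly}(t)=\mathrm{poly}(\ell,\log N)$ time, giving the claimed evaluation bound.

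For the $k$-wise independence property, I would argue as follows. Fix any $k$ distinct points $x_1,\ldots,x_k\in[N]\subseteq\mathbb{F}_{2^t}$ and any target values $v_1,\ldots,v_k\in\mathbb{F}_{2^t}$. By Lagrange interpolation, there is exactly one polynomial of degree less than $k$ over $\mathbb{F}_{2^t}$ sending $x_i\mapsto v_i$ for each $i$; hence the map $(a_0,\ldots,a_{k-1})\mapsto(p(x_1),\ldots,p(x_k))$ is a bijection on $(\mathbb{F}_{2^t})^k$. Since the coefficients are chosen uniformly at random, $(p(x_1),\ldots,p(x_k))$ is uniform on $(\mathbb{F}_{2^t})^k$. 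Applying the truncation $\pi$ coordinatewise and using that $\pi$ is balanced, $(h(x_1),\ldots,h(x_k))$ is uniform on $(\{0,1\}^\ell)^k$, which is exactly the definition of $k$-wise independence.

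The construction is essentially forced, so I don't anticipate a real obstacle; the only minor technical point worth checking is that the seed length comes out as $k(\ell+\log N)+O(1)$ rather than something like $k\cdot(\ell+\log N+1)$. This is handled by the definition $t=\max(\ell,\lceil\log_2 N\rceil)$ and the observation that $\max(a,b)\le a+b$, so that $k\cdot t \le k\ell + k\log N$, with the additive $O(1)$ absorbing the ceiling and any constant-size overhead for specifying the field (i.e., an irreducible polynomial of degree $t$ over $\mathbb{F}_2$, which can be precomputed once and is not part of the random seed).
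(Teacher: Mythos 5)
The paper does not prove this lemma; it cites it as a known result from \cite{ABI86, CG89, EGL+98}. Your proposal is the standard polynomial-evaluation construction over $\mathbb{F}_{2^t}$, which is exactly what those references use, and the argument is correct: Lagrange interpolation gives uniformity of $(p(x_1),\ldots,p(x_k))$ on $k$ distinct inputs, and the balanced truncation $\pi$ preserves uniformity, yielding $k$-wise independence.

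Two small points of imprecision are worth flagging, though they trace to the looseness of the cited statement rather than a flaw in your argument. First, the ceiling in $t=\max(\ell,\lceil\log_2 N\rceil)$ contributes an additive $+k$ to the seed length, not $+O(1)$: you get $kt\le k(\ell+\log N)+k$. Treating $\log N$ as shorthand for $\lceil\log_2 N\rceil$, or tolerating an $O(k)$ slack, removes the issue; your claim that the $O(1)$ ``absorbs the ceiling'' is not literally right as stated. Second, Horner evaluation of a degree-$(k-1)$ polynomial costs $\Theta(k)$ field operations, so the evaluation time is really $\poly(k,\ell,\log N)$ rather than $\poly(\ell,\log N)$; again this matches the standard formulations of this lemma and the omission of $k$ in the paper's statement is cosmetic. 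Neither point affects the correctness of your construction or its use elsewhere in the paper, where $k=O(1)$ or $k=O(\log n)$.
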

Moreover, to show concentration around the expected value under $k$-wise independence, we will use the following tail bound.
\begin{lemma}[Lemma 2.3 of~\cite{BR94}]\label{lem:kwise_bound}
    Let $k \ge 4$ be an even integer. Let $X_1,\ldots,X_n$ be random variables taking values in $[0,1]$. Let $X = X_1 + \ldots + X_n$ denote their sum and let $\mu \le \E[X]$ satisfying $\mu \ge k$. Then, for any $\epsilon > 0$, we have
    \begin{equation*}
        \Pr\left[|X - \E[X]| \ge \epsilon \cdot \E[X]\right] \le 8 \left( \frac{2k}{\epsilon^2 \mu}\right)^{k/2}.
    \end{equation*}
\end{lemma}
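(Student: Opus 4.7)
The plan is to use the $k$-th moment method, which is the standard approach for $k$-wise independent tail bounds. Note that while the statement does not explicitly say so, the intended assumption (and the setting of \cite{BR94}) is that $X_1, \ldots, X_n$ are $k$-wise independent; I will use this. Center the variables by setting $Y_i := X_i - \E[X_i]$ so that $\E[Y_i] = 0$, $|Y_i| \le 1$, and $Y := \sum_i Y_i = X - \E[X]$. Since $k$ is even, $Y^k \ge 0$, so Markov's inequality yields
\begin{equation*}
\Pr\bigl[|X - \E[X]| \ge \epsilon\, \E[X]\bigr] \;\le\; \frac{\E[Y^k]}{(\epsilon\, \E[X])^k}.
\end{equation*}
The remaining task is to bound $\E[Y^k]$ from above by roughly $(Ck\,\E[X])^{k/2}$.

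To do so, expand $Y^k = \sum_{(i_1,\ldots,i_k) \in [n]^k} \prod_{p=1}^{k} Y_{i_p}$ and group the $k$-tuples according to the set partition $\pi$ of $[k]$ they induce via equality of indices. By $k$-wise independence, the expectation of each term factorizes as $\prod_{B \in \pi} \E[Y_{i_B}^{|B|}]$, where $i_B$ denotes the common index of the positions in block $B$. Any block of size $1$ contributes a factor $\E[Y_i] = 0$, so only partitions whose blocks all have size $\ge 2$ survive, and these necessarily have at most $j \le k/2$ blocks. For any surviving block of size $m \ge 2$, the inequality $|Y_i|^m \le Y_i^2$ (which uses $|Y_i| \le 1$) gives $|\E[Y_i^m]| \le \E[Y_i^2] = \Var(X_i)$. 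Summing over injections from the $j$ blocks to $[n]$ is thus bounded by $\sigma^{2j}$, where $\sigma^2 := \sum_i \Var(X_i) \le \sum_i \E[X_i] = \E[X]$ (again using $X_i \in [0,1]$).

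Combining these observations,
\begin{equation*}
\E[Y^k] \;\le\; \sum_{j=1}^{k/2} N_j \cdot \sigma^{2j} \;\le\; \sum_{j=1}^{k/2} N_j \cdot \E[X]^{j},
\end{equation*}
where $N_j$ is the number of set partitions of $[k]$ into exactly $j$ blocks each of size $\ge 2$. A convenient bound is obtained by picking two ``anchor'' positions per block (at most $\binom{k}{2j}\cdot(2j-1)!!$ ways) and then assigning each of the remaining $k-2j$ positions to one of the $j$ blocks (at most $j^{k-2j}$ ways), yielding $N_j \le \binom{k}{2j}(2j-1)!!\, j^{k-2j}$. Using $\binom{k}{2j}(2j-1)!! \le k^{2j}/(2^j j!)$ and $j^{k-2j} \le (k/2)^{k-2j}$, a routine (though fiddly) calculation collapses the sum into the single dominant term and gives $\E[Y^k] \le 8 \cdot (2k\,\E[X])^{k/2}$.

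Plugging this estimate into Markov and using $\mu \le \E[X]$ in the resulting denominator yields
\begin{equation*}
\Pr\bigl[|X - \E[X]| \ge \epsilon\, \E[X]\bigr] \;\le\; 8\left(\frac{2k}{\epsilon^2 \mu}\right)^{k/2},
\end{equation*}
as claimed; the hypothesis $\mu \ge k$ is what ensures the bound is non-trivial (below $1$ for the relevant range of $\epsilon$). The main obstacle is purely combinatorial: producing a clean bound on $N_j$ and tracking the constants through the sum over $j$ tightly enough to recover the stated constant $8$ and the factor $2k$ rather than some larger polynomial in $k$. Dominating the sum by its $j = k/2$ term (where all blocks are pairs, so $N_{k/2} = (k-1)!!$) is the cleanest route, with a geometric-series bound absorbing the smaller $j$'s into the overall constant.
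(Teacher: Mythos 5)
This lemma is imported: the paper states it as ``Lemma~2.3 of~\cite{BR94}'' and never proves it, so there is no in-paper argument to compare against. I therefore assess your proposal on its own terms.

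Your high-level plan --- center the variables, apply Markov to the $k$-th moment, expand $(\sum_i Y_i)^k$, group $k$-tuples by the set partition they induce, kill size-$1$ blocks with $\E[Y_i]=0$, and use $|Y_i|\le 1$ to reduce each surviving block to a variance factor --- is indeed the standard Bellare--Rompel-style argument, and everything up through the inequality $\E[Y^k] \le \sum_{j=1}^{k/2} N_j\,\sigma^{2j} \le \sum_{j=1}^{k/2} N_j\,\E[X]^{j}$ is correct (including the remark that one must \emph{add} the $k$-wise independence hypothesis to the paper's statement; without it the factorization $\E\bigl[\prod_{B\in\pi}Y_{i_B}^{|B|}\bigr]=\prod_{B\in\pi}\E[Y_{i_B}^{|B|}]$ is unjustified).

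The gap is in the ``routine (though fiddly) calculation.'' Following the bounds you specify, one gets
\begin{equation*}
N_j \le \binom{k}{2j}(2j-1)!!\, j^{k-2j} \le \frac{k^{2j}}{2^j j!}\Bigl(\frac{k}{2}\Bigr)^{k-2j}
\quad\Longrightarrow\quad
T_j := N_j\,\E[X]^j \le \frac{k^k}{2^{k-j}j!}\,\E[X]^j,
\end{equation*}
and since $\mu\ge k$ makes the ratio $T_{j+1}/T_j \ge 4$, the sum telescopes to $\E[Y^k]\le \tfrac{4}{3}\cdot\frac{k^k(2\E[X])^{k/2}}{2^k (k/2)!}$. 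But Stirling then gives $\frac{k^k}{2^k(k/2)!}\approx (ek/2)^{k/2}/\sqrt{\pi k}$, so your chain of inequalities yields $\E[Y^k]\lesssim (ek\,\E[X])^{k/2}$, i.e., a tail bound of order $(ek/\epsilon^2\mu)^{k/2}$. Since $e>2$, this is \emph{larger} than the claimed $8(2k/\epsilon^2\mu)^{k/2}$ by a factor $\Theta\bigl((e/2)^{k/2}/\sqrt{k}\bigr)$, which exceeds $1$ once $k\gtrsim 20$ --- so the calculation you sketch does \emph{not} ``collapse'' to the stated constant. The culprit is the step $\binom{k}{2j}\le k^{2j}/(2j)!$, which is off by about $e^{2j}$ near $j=k/2$ (e.g.\ at $j=k/2$ it replaces $\binom{k}{k}=1$ by $k^k/k!\approx e^k$). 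To recover $8(2k/\epsilon^2\mu)^{k/2}$ you must keep $\binom{k}{2j}(2j-1)!! = k!/\bigl((k-2j)!\,2^j j!\bigr)$ exact (or use a different grouping of tuples, as BR94 do), rather than pass through $k^{2j}/(2^j j!)$.

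A smaller point: you characterize the hypothesis $\mu\ge k$ as merely ``ensuring the bound is non-trivial.'' In fact it is load-bearing \emph{inside} the moment estimate --- it is exactly what makes the successive ratio $2\E[X]/(j+1)\ge 4$ hold for all $j\le k/2-1$, so that the sum over $j$ is dominated by its last term. Without it the dominant $j$ sits strictly below $k/2$ and the bookkeeping changes. Also, your parenthetical that ``$j=k/2$ is the cleanest dominant term'' is accurate only after applying your (lossy) upper bounds; with the exact $N_j$ and $\E[X]$ close to $k$, the $j<k/2$ terms can exceed the $j=k/2$ one, so the dominance must be argued for the upper-bounded sequence, not the exact one.
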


We consider randomized algorithms that succeed in expectation when their random choices are made using a family of $k$-wise independent hash functions $\mathcal{H}$. Once our algorithm (randomly) picks a hash function $h$, then all choices are made deterministically according to $h$. Thus, our problem is that of deterministically finding a hash function that achieves a result as good as the expectation. 

The by-now standard \mpc\ derandomization process can be broken down into two parts: (i) show that the family of hash functions $\mathcal{H}$ has size $\poly(n)$ \textit{and} produces the desired result in expectation, and (ii) find one good hash function by applying the method of conditional expectation in a distributed fashion. We will focus on establishing (i), since (ii) can then be achieved by known \mpc\ derandomization methods introduced by earlier works \cite{CPS17, CC22, CDP21mis} to which we refer for further details.
It is worth mentioning that for step (ii) to be solved using earlier tools as a black-box, the aimed expectation should be expressed as a sum of locally computable quantities by each individual machine, i.e., the individual expectation of each node that a machine stores.

\section{Deterministic 2-Ruling Set in Linear MPC}\label{sec:linear}
We first introduce the reader to several sets of nodes that play a crucial role in our algorithm. These sets of nodes are defined to reflect how a node will be handled by our algorithm. Specifically, the core of the algorithm is a downsampling procedure that outputs a sufficiently small subgraph on which we will compute a maximal independent set with the goal of \textit{ruling} a large fraction of nodes in the original graph. 

Observe that if a node has a neighbor in the downsampled graph, then it will have some node in the maximal independent set at distance at most two. This means that if a node is likely to have a sampled neighbor, then it is likely to be ruled, and we call such a node \textit{good}. In the following, our definitions and algorithm are parameterized by a constant $\varepsilon = 1/40$, which has not been optimized.
\begin{definition}[Good Node]
    A node $v \in G$ is good if it satisfies $\sum_{u \in N(v)} \frac{1}{\sqrt{\deg (u)}} \ge \deg(v)^{\varepsilon}$.
\end{definition}
If a node $v$ is not good, i.e., $\sum_{u \in N(v)} \frac{1}{\sqrt{\deg (u)}} <\deg(v)^{\varepsilon}$, then we say that $v$ is a \textit{bad} node. Bad nodes are split into $O(\log \Delta)$ degree classes as follows. Let $d_0$ be a sufficiently large constant and $d_{\text{max}} = \ceil{\log \Delta}$. 
\begin{definition}[Bad Node Classes]
    For $d \in \{2^{d_0}, 2^{d_0 + 1}, \ldots, 2^{ d_{\text{max}}}]$, the set $B_d$ includes all bad nodes with degree in $[d, 2d)$. 
\end{definition}
Therefore, bad nodes are likely to have few sampled nodes. This fact motivates the following observation. If a (bad) node has \textit{many} bad nodes within its $2$-hop neighborhood, then it is likely that at least one of such bad ones is in the maximal independent set. If that is the case, we call such nodes \textit{lucky} bad nodes, as specified in the following definition.
\begin{definition}[Lucky Bad Nodes]
    For $d \in \{2^{d_0}, 2^{d_0 + 1}, \ldots, 2^{ d_{\text{max}}}]$, the set $\overline{B}_d \subseteq B_d$ includes each node $u \in B_d$ such that $u$ has a neighbor $w$ with $|N(w) \cap B_d| \geq 6d^{0.6}$. If there are multiple such $w$'s, pick one arbitrarily and let $S_u$ be an arbitrarily chosen subset of $N(w) \cap B_d$ such that $\card{S_u} = 6d^{0.6}$. 
\end{definition}

With these definitions in mind, we are now ready to present our deterministic constant-round $2$-Ruling Set algorithm in the linear regime of MPC. 

The algorithm operates in three simple steps: Sampling, Gathering, and MIS Computation. The first step of the algorithm samples each node $v$ with probability $\deg^{-1/2}(v)$. The sampling probability is chosen to ensure that the downsampled graph has a linear number of edges. Moreover, we will slightly alter the downsampled graph to include all nodes that do not satisfy certain requirements, without affecting the asymptotic size of this subgraph. Therefore, in the second step, we will be able to collect such subgraph onto a single machine. Then, the MIS computation begins by running one iteration of Luby's MIS on (part of) the subgraph from the previous step and continues by extending such independent set to a maximal one locally.

We will prove several desirable properties about the three-step algorithm above that lead to a reduction of a $d^{\Omega(1)}$-factor for each degree class $d$. Therefore, by repeating this three-step algorithm $O(1)$ times, the number of edges over all degree classes converges to $O(n)$ and thus can be collected and solved locally, completing the proof of \cref{thm:linear-memory}.

Next, we present the algorithm in more detail and then proceed to analyzing its three steps with a particular focus on randomness efficiency. In fact, such randomness-efficient analyses will allow for a simple derandomization. 

\subsection{The Algorithm}
\subparagraph{Sampling Step} Let $G = (V, E)$ be the input graph with $n$ vertices and $m$ edges. Let $V_{\text{samp}}$ denote the set of sampled vertices. We include each vertex $v \in V$ in $V_{\text{samp}}$ with probability $p_v = \frac{1}{\sqrt{\text{deg}(v)}}$, according to a family of $k$-wise independent random variables with $k=O(1)$.

\subparagraph{Gathering Step} We gather several subsets of nodes whose (combined) induced subgraph will be shown to have a linear number of edges. Gathered nodes are those either sampled in the previous step or not satisfying certain properties as formally defined below. Let $V^{*}$ denote the union of the following node subsets, which are being gathered locally onto a single machine: 
\begin{enumerate}
    \item The set of sampled nodes $V_{\text{samp}}$;
    \item Every good node that is \textit{not sampled} and has \textit{no sampled neighbors};
    \item For each $d$, every lucky bad node $u \in \overline{B_d}$ that has either less than $d^{0.1}$ sampled nodes in $S_u$ or one of the sampled nodes in $S_u$ has more than $d^{2\varepsilon}$ sampled neighbors; as formalized in \cref{lem:bad_nodes_not_ruled}.
\end{enumerate}

\subparagraph{MIS Computation} Our goal is now to compute a maximal independent set on the locally gathered subgraph $G[V^*]$ to rule all but roughly at most a $\Delta^{\Omega(1)}$-fraction of nodes in $G$. We achieve this by first computing a partial MIS on the sampled bad vertices, i.e., $\bigcup_{d}B_d \cap V_{\text{samp}}$, using a variation of Luby's algorithm as detailed in the proof of \cref{lem:bad_nodes_mis}. Afterward, we can simply compute an MIS locally (and thus sequentially) on the remaining vertices, which are not incident to the partial MIS computed earlier.

\subparagraph{Output Properties} We expect that the output given by the \textit{derandomization} of the above three-step process satisfies the following properties. We will later use these properties to achieve a deterministic constant-round complexity. Observe that we can ignore constant-degree nodes since they can be gathered and dealt with locally at last.
\begin{itemize}
    \item \textbf{Good nodes:} All good nodes in $G$ are ruled after the MIS step.
    \item \textbf{Uncovered lucky bad nodes:} For each $d$, after the computation of a partial MIS, only a $d^{\Omega(1)}$-fraction of lucky bad nodes remains uncovered.
    \item \textbf{Uncovered bad nodes:} For each $d$, the number of bad nodes in $B_d \setminus \overline{B}_d$ is only a $d^{\Omega(1)}$-fraction of all nodes with initial degree at least $d$ in $G$ .
\end{itemize}

\subsection{Analysis}

We first establish that good nodes are likely to have a neighbor in $V_{samp}$. Since we will compute an MIS on $V^* \supseteq V_{samp}$, such good nodes will be at distance at most $2$ from a node in the MIS. Moreover, good nodes that have no sampled neighbor will be shown to be incident to a linear number of edges, allowing us to gather them as part of $V^*$.
\begin{lemma}\label{lem:good_nodes_not_ruled}
    Every good vertex $v$ has a neighbor in $V_{\text{samp}}$ with probability at least $1 - \frac{1}{\text{poly}(\text{deg}(v))}$.
\end{lemma}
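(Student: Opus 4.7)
}

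The plan is to apply the $k$-wise tail bound of \Cref{lem:kwise_bound} to the indicator variables of $v$'s neighbors being sampled, leveraging the goodness condition to lower bound the expected number of sampled neighbors. For each $u \in N(v)$, let $X_u = \mathbbm{1}[u \in V_{\text{samp}}]$, and let $X = \sum_{u \in N(v)} X_u$ count the number of sampled neighbors of $v$. Since the sampling step is implemented via a family of $k$-wise independent random variables with $k = O(1)$ chosen to be an even integer (whose exact value we will fix at the end), the $X_u$'s are $k$-wise independent and take values in $[0,1]$.

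First I compute the expectation: $\mathbb{E}[X] = \sum_{u \in N(v)} p_u = \sum_{u \in N(v)} \tfrac{1}{\sqrt{\deg(u)}} \ge \deg(v)^{\varepsilon}$, where the inequality is exactly the definition of $v$ being a good node. Since the lemma statement (and the algorithm) may harmlessly exclude vertices of degree bounded by some absolute constant, I will assume $\deg(v)$ is large enough that $\deg(v)^{\varepsilon} \ge k$, so that \Cref{lem:kwise_bound} applies with $\mu = \deg(v)^{\varepsilon}$.

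Next I invoke \Cref{lem:kwise_bound} with $\epsilon = 1$ to control the lower tail. The event that $v$ has no sampled neighbor, $\{X = 0\}$, is contained in $\{|X - \mathbb{E}[X]| \ge \mathbb{E}[X]\}$, so
\begin{equation*}
    \Pr[X = 0] \;\le\; \Pr\bigl[|X - \mathbb{E}[X]| \ge \mathbb{E}[X]\bigr] \;\le\; 8\left(\frac{2k}{\mu}\right)^{k/2} \;\le\; 8\left(\frac{2k}{\deg(v)^{\varepsilon}}\right)^{k/2}.
\end{equation*}
For any desired polynomial exponent $c > 0$ in the statement, choosing the even constant $k \ge 2c/\varepsilon + 4$ makes the right-hand side at most $1/\deg(v)^{c}$ for all sufficiently large $\deg(v)$, which is the required $1/\poly(\deg(v))$ bound.

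The only real friction is that we are stuck with $k$-wise rather than full independence, which precludes the Chernoff-style bound $\exp(-\deg(v)^{\varepsilon})$ one would obtain in the fully independent analysis of \cite{CKPU23}. The key observation that makes the weaker tool sufficient is that we only need an inverse-polynomial (not inverse-exponential) failure probability per good vertex, since the algorithm is designed so that any remaining uncovered good vertices contribute only $O(n)$ edges in total after gathering. Thus tuning the constant $k$ to match the constant $c$ hidden inside $\poly(\cdot)$ is the entire proof.
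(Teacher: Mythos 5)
Your proof is correct and follows essentially the same route as the paper's: both define $X$ as the count of sampled neighbors, observe $\E[X] \ge \deg(v)^{\varepsilon}$ by the goodness condition, apply \Cref{lem:kwise_bound} with deviation $\epsilon = 1$ to bound $\Pr[X=0] \le 8(2k/\mu)^{k/2}$, and conclude by taking $k = O(1)$ large enough. The only cosmetic difference is that you make the choice of $k$ in terms of the target exponent $c$ explicit, whereas the paper leaves it implicit in the $\poly(\cdot)$ notation.
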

\begin{proof}
For any vertex $u$, let $X_u$ be the indicator random variable for the event $u \in V_{\text{samp}}$, and $X$ be the random number of neighbors of $v$ in $V_{\text{samp}}$. Further, let $\mu := \mathbb{E}[X] = \sum_{u \in N(v)}\mathbb{E}[X_u] = \sum_{u \in N(v)}\Pr[X_u = 1] \ge \text{deg}(v)^{\varepsilon} \gg k$, since nodes of constant degree can be ignored and dealt with separately at last by collecting them onto a single machine. By applying \Cref{lem:kwise_bound}, we have
\begin{align*}
    \Pr[X=0] \le \Pr[|X - \mu| \ge \mu] &\le 8\cdot \left(\frac{k\mu + k^2}{\mu^2}\right)^{k/2} \le 8\cdot \left(\frac{2k}{\mu}\right)^{k/2} = \frac{1}{\text{poly}(\text{deg}(v))},
\end{align*}
which proves the lemma.
\end{proof}

Toward the goal of ruling lucky bad nodes, we next show that bad nodes are likely to have few sampled neighbors. This means that sampled bad nodes, by having a low degree in the sampled graph, will have higher chances of being in the partial MIS computed later.

\begin{lemma}\label{lem:sampled_bad_nodes_neighbors}
    Any node $u \in B_d$ has at most $d^{2 \varepsilon}$ sampled neighbors with probability at least $1 - \frac{1}{\text{poly}(d)}$.
\end{lemma}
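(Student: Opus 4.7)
The plan is to treat the number $X$ of sampled neighbors of $u \in B_d$ as a sum of bounded $k$-wise independent $\{0,1\}$ variables and apply the concentration bound of \Cref{lem:kwise_bound}, exactly as in the proof of \Cref{lem:good_nodes_not_ruled}, but now to bound an upper tail far above the mean rather than to rule out the zero event.

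First, I would let $X_w = \mathbbm{1}[w \in V_{\text{samp}}]$ for $w \in N(u)$ and write $X = \sum_{w \in N(u)} X_w$. By the sampling rule, $\mu := \mathbb{E}[X] = \sum_{w \in N(u)} 1/\sqrt{\deg(w)}$, and since $u$ is bad with $\deg(u) < 2d$, the definition of badness gives
\[
\mu \;=\; \sum_{w \in N(u)} \frac{1}{\sqrt{\deg(w)}} \;<\; \deg(u)^{\varepsilon} \;\le\; (2d)^{\varepsilon} \;\le\; 2 d^{\varepsilon}.
\]
Thus the target threshold $d^{2\varepsilon}$ dominates $\mu$ by a factor of at least $d^{\varepsilon}/2$ once $d$ is larger than some constant (which is absorbed by $d \ge 2^{d_0}$ for a sufficiently large $d_0$).

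Next, I would invoke \Cref{lem:kwise_bound} with an even constant $k$ chosen large enough that $k\varepsilon$ exceeds the polynomial decay we want (e.g.\ $k \ge 4/\varepsilon$), setting the deviation parameter $\epsilon' := (d^{2\varepsilon} - \mu)/\mu \ge d^{\varepsilon}/3$. Then the event $\{X \ge d^{2\varepsilon}\}$ is contained in $\{|X-\mu| \ge \epsilon' \mu\}$, and the lemma yields
\[
\Pr[X \ge d^{2\varepsilon}] \;\le\; 8 \left(\frac{2k}{\epsilon'^{2}\,\mu}\right)^{k/2} \;\le\; 8 \left(\frac{18\,k}{d^{2\varepsilon}\,\mu/\mu}\right)^{k/2} \;=\; 8\left(\frac{18k}{d^{2\varepsilon}}\right)^{k/2} \;=\; \frac{1}{\poly(d)},
\]
where I used $\epsilon'^{2} \mu \ge (d^{\varepsilon}/3)^2 \cdot \mu$ and then absorbed $\mu \ge k$ to cancel the denominator $\mu$ against the factor $k$ in the numerator. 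This is exactly the bound claimed.

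The one technical step is the hypothesis $\mu \ge k$ of \Cref{lem:kwise_bound}, since for a bad node $\mu$ can be arbitrarily small (unlike the good-node case in \Cref{lem:good_nodes_not_ruled} where $\mu \ge \deg(v)^{\varepsilon}$ is automatic). I expect this to be the main, though minor, obstacle. I would handle it by splitting into two cases: if $\mu \ge k$, apply the concentration bound as above; if $\mu < k$, apply Markov's inequality directly, using the nonnegativity of $X$ to obtain $\Pr[X \ge d^{2\varepsilon}] \le \mu/d^{2\varepsilon} < k/d^{2\varepsilon} = 1/\poly(d)$. Since $k$ is a constant, both cases yield the same $1 - 1/\poly(d)$ guarantee, completing the proof.
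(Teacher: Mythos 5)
Your proof is correct and follows the same route as the paper: set up the indicator variables, bound $\mu < 2d^{\varepsilon}$ from the definition of a bad node, and apply the $k$-wise tail bound to the upper-tail event $\{X \ge d^{2\varepsilon}\}$. You rightly noticed that the stated form of \Cref{lem:kwise_bound} requires $\mu \ge k$; the paper sidesteps this by silently using the unnormalized Bellare--Rogaway bound $\Pr[|X-\E X|\ge A]\le 8\bigl((k\E X + k^2)/A^2\bigr)^{k/2}$ (visible in the factor $k^2 + k\mu$ appearing in its display), which needs no such hypothesis, whereas your Markov-inequality fallback for the $\mu < k$ case is an equally valid and arguably more transparent way to close the same gap.
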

\begin{proof}
Recall that for any $u \in B_d$, it holds that $\sum_{w \in N(u)}\frac{1}{\sqrt{\text{deg}(w)}} < \text{deg}(u)^{\varepsilon}$. We will use this fact to prove that the number of sampled neighbors does not deviate by more than $O(d^{2 \varepsilon})$ with probability at least $1 - \frac{1}{\text{poly}(d)}$. 
Let $X_w$ be the indicator random variable for the event $w \in V_{\text{samp}}$, and $X$ be the random number of neighbors of $u$ in $V_{\text{samp}}$. Let $\mu = \mathbb{E}[X] = \sum_{w \in N(u)}\mathbb{E}[X_w] = \sum_{w \in N(u)}\Pr[X_w = 1] < \text{deg}(u)^{\varepsilon} < 2d^{\varepsilon}$. By applying \Cref{lem:kwise_bound}, we get
\begin{align*}
    \Pr[|X - \mu| \ge d^{2 \varepsilon} - \mu] &\le 8\cdot \left(\frac{k^2 + k\mu}{(d^{2 \varepsilon} - \mu)^2}\right)^{k/2} \le 8\cdot \left(\frac{2k^2}{d^{ \varepsilon}}\right)^{k/2} = \frac{1}{\text{poly}(d)}.
\end{align*}
Note that for small values of $d$, our constant $d_0$ can be chosen such that $2^{d_0 \cdot \varepsilon} = \Omega(k^2)$.
\end{proof}

The next lemma proves that each lucky bad node $u$ has a large number of nodes sampled out of its set $S_u$. Specifically, we need to show that the number of sampled nodes in $S_u$ is higher than the degree of such nodes in the sampled graph. This fact will be used to ensure that lucky bad nodes have a vertex, within their $2$-hop neighborhoods, in the MIS, thereby, ensuring their coverage.

\begin{lemma}\label{lem:bad_nodes_not_ruled}
    For any lucky bad node $u$, its set $S_u \subseteq B_d$ of cardinality $6d^{0.6}$ contains at least $d^{0.1}$ sampled nodes and each sampled node in $S_u$ has at most $d^{2\varepsilon}$ sampled neighbors with probability at least $1 - \frac{1}{\text{poly}(d)}$.
\end{lemma}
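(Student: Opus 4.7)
The plan is to split the stated event into two sub-events and bound each of them separately using the $k$-wise tail bound (\Cref{lem:kwise_bound}) and the previously established \Cref{lem:sampled_bad_nodes_neighbors}, then combine them with a union bound.

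First, I would handle the lower bound on the number of sampled nodes in $S_u$. For each $w \in S_u \subseteq B_d$, the degree of $w$ lies in $[d, 2d)$, so $w$ is sampled with probability $1/\sqrt{\deg(w)} \ge 1/\sqrt{2d}$. By linearity of expectation (which holds under $k$-wise independence for $k \ge 1$),
\[
\mu \;=\; \E\bigl[|S_u \cap V_{\text{samp}}|\bigr] \;\ge\; \frac{6 d^{0.6}}{\sqrt{2d}} \;=\; \frac{6}{\sqrt{2}}\, d^{0.1}.
\]
Choosing, say, a constant deviation factor $\eps = 1/2$ (so that $(1-\eps)\mu > d^{0.1}$ for $d$ at least our constant $d_0$), \Cref{lem:kwise_bound} yields
\[
\Pr\!\left[|S_u \cap V_{\text{samp}}| < d^{0.1}\right] \;\le\; \Pr\!\left[\bigl||S_u\cap V_{\text{samp}}| - \mu\bigr| \ge \eps \mu\right] \;\le\; 8\!\left(\frac{2k}{\eps^2 \mu}\right)^{k/2} \;=\; \frac{1}{\poly(d)},
\]
where we rely on the fact that $\mu = \Omega(d^{0.1}) \gg k$ for sufficiently large $d_0$, exactly as in the proofs of the previous two lemmas.

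Second, I would handle the degree bound of each sampled node in $S_u$ inside $V_{\text{samp}}$. Fix any $w \in S_u$. Since $w \in B_d$, \Cref{lem:sampled_bad_nodes_neighbors} gives
\[
\Pr\!\left[w \text{ has more than } d^{2\eps} \text{ sampled neighbors}\right] \;\le\; \frac{1}{\poly(d)}.
\]
Taking a union bound over the $|S_u| = 6 d^{0.6}$ vertices in $S_u$, the event that \emph{some} $w \in S_u$ violates the degree bound occurs with probability at most $6 d^{0.6} / \poly(d) = 1/\poly(d)$, where we choose the implicit polynomial large enough (which we can, since $k$ is a sufficiently large constant in \Cref{lem:sampled_bad_nodes_neighbors}).

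Finally, a union bound over the two failure events completes the proof with overall failure probability $1/\poly(d)$. I do not expect any real obstacle here: the argument is essentially mechanical once the two sub-events are identified. The only point that deserves care is bookkeeping the constants so that the $k$-wise concentration gives the desired $1/\poly(d)$ decay even though the relevant expectations are only polynomially small ($\Theta(d^{0.1})$ and $O(d^{\eps})$) rather than $\Theta(\log n)$; this is handled by picking the constant $k$ large enough and absorbing the $6 d^{0.6}$ factor from the union bound into the polynomial, exactly as the paper does for analogous arguments.
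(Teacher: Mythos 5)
Your proof is correct and follows essentially the same approach as the paper: bound the probability that fewer than $d^{0.1}$ nodes of $S_u$ are sampled via \Cref{lem:kwise_bound} (the paper uses the threshold $\mu - d^{0.1}$ where you use $\eps\mu$ with $\eps = 1/2$; both work since $\mu \ge 3d^{0.1}$), invoke \Cref{lem:sampled_bad_nodes_neighbors} with a union bound over the $6d^{0.6}$ elements of $S_u$ for the degree condition, and combine the two failure events by a union bound.
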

\begin{proof}
By \cref{lem:sampled_bad_nodes_neighbors} and a union bound over the set $S_u$ of $6d^{0.6}$ nodes, none of them has more than $d^{2 \varepsilon}$ sampled neighbors with probability at least $1 - \frac{1}{\text{poly}(d)}$. Our goal is now to prove that the number of sampled vertices within $S_u$ is less than $d^{0.1}$ with probability at most $\frac{1}{\text{poly}(\text{deg}(u))} = \frac{1}{\text{poly}(d)}$. 

Let $X$ be the random number of sampled vertices in $S_u$, and let $\mu = \mathbb{E}[X] \ge 3d^{0.1}$, since each vertex in $B_d$ is sampled with probability at least $1/\sqrt{2d}$. 
By applying \Cref{lem:kwise_bound}, the probability of $X$ deviating by more than $d^{0.1}$ from its expected value is
\begin{align*}
    \Pr[|X - \mu| \ge \mu - d^{0.1}] &\le 8\cdot \left(\frac{2k\mu}{(\mu - d^{0.1})^2}\right)^{k/2} \le 8\cdot \left(\frac{2k}{d^{0.1}}\right)^{k/2} = \frac{1}{\text{poly}(\text{deg}(u))}.
\end{align*}
\end{proof}
 
We now use the above lemmas, together with a bound on the number of edges induced by the sampling step, to prove that our gathering step effectively collects $O(n)$ edges.

\begin{lemma}\label{lem:first_sampling}
The subgraph induced by $G[V^*]$ has $O(n)$ edges in expectation. 
\end{lemma}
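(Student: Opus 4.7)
The plan is to decompose $V^* = V_{\text{samp}} \cup V_2 \cup V_3$, where $V_2$ denotes the good unsampled nodes with no sampled neighbors (type 2) and $V_3 = \bigcup_d V_3^{(d)}$ collects the lucky bad nodes falling under condition (3), and then to bound the expected number of edges with both endpoints in $V^*$ through the decomposition
\[
\Pr[u \in V^*,\, v \in V^*] \;\le\; \Pr[u,v \in V_{\text{samp}}] \;+\; \Pr[u \in V_2 \cup V_3] \;+\; \Pr[v \in V_2 \cup V_3],
\]
which follows from the fact that whenever both endpoints are in $V^*$, at least one of these three events holds. Summing this over all edges of $G$ and invoking linearity of expectation reduces the task to controlling three separate sums, each of which I expect to be $O(n)$.

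For the first term, since the sampling is $k$-wise independent with $k \ge 2$, pairwise independence yields $\Pr[u,v \in V_{\text{samp}}] = p_u p_v = (\deg(u)\deg(v))^{-1/2}$. Applying AM--GM, $(\deg(u)\deg(v))^{-1/2} \le \tfrac{1}{2}(\deg(u)^{-1} + \deg(v)^{-1})$, and then the handshake-style rearrangement $\sum_{(u,v)\in E}\deg(u)^{-1} = \sum_u \deg(u)\cdot \deg(u)^{-1} = n$ gives the desired $O(n)$ bound on the contribution of doubly-sampled edges.

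For the second and third terms, note $\sum_{(u,v) \in E} \Pr[u \in V_2 \cup V_3] = \sum_u \deg(u) \Pr[u \in V_2 \cup V_3]$. The contribution of $V_2$ is handled by \Cref{lem:good_nodes_not_ruled}: for a good node $u$, $\Pr[u \in V_2] \le \Pr[u \text{ has no sampled neighbor}] \le 1/\poly(\deg(u))$, so $\sum_u \deg(u)/\poly(\deg(u)) = O(n)$ provided the polynomial exponent is at least $2$ (which is available by taking $k$ large enough in \Cref{lem:kwise_bound}, and using that constant-degree nodes are dealt with separately so $\deg(u) \ge 2^{d_0}$). The contribution of $V_3$ is handled by \Cref{lem:bad_nodes_not_ruled}: for $u \in \overline{B_d}$, $\Pr[u \in V_3^{(d)}] \le 1/\poly(d)$, whence
\[
\sum_{d} \sum_{u \in \overline{B_d}} \deg(u) \cdot \frac{1}{\poly(d)} \;\le\; \sum_d |\overline{B_d}| \cdot \frac{2d}{\poly(d)} \;\le\; \sum_d \frac{n}{\poly(d)} \;=\; O(n),
\]
using $|\overline{B_d}| \le n$ and that the dyadic sum $\sum_d d^{-(c-1)}$ converges for any $c \ge 2$. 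The symmetric sum over $v$ is identical, so adding all three contributions yields $\mathbb{E}[|E(G[V^*])|] = O(n)$.

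The main obstacle I anticipate is resisting the tempting but too-loose bound $|E(G[V^*])| \le \sum_{v \in V^*} \deg(v)$, which would give $\sum_v \deg(v)\, p_v = \sum_v \sqrt{\deg(v)}$ for the sampled contribution---a quantity that is $\Theta(n^{3/2})$ for dense graphs rather than $O(n)$. Avoiding this requires the pairwise-independence computation for both endpoints being sampled simultaneously, and then treating edges touching $V_2 \cup V_3$ via the ``one endpoint'' bound, which is strong enough precisely because the polynomial decay in \Cref{lem:good_nodes_not_ruled} and \Cref{lem:bad_nodes_not_ruled} kills the $\deg(u)$ factor.
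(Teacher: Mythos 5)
Your proof is correct and follows essentially the same structure as the paper's: decompose $V^*$ into the sampled set and the two exception sets, bound the doubly-sampled edges via pairwise independence, and bound the remaining edges via the polynomial-in-degree failure probabilities from \Cref{lem:good_nodes_not_ruled} and \Cref{lem:bad_nodes_not_ruled}. The only cosmetic difference is in the doubly-sampled term: you bound $(\deg(u)\deg(v))^{-1/2}$ by the arithmetic mean $\tfrac12(\deg(u)^{-1}+\deg(v)^{-1})$ via AM--GM, whereas the paper orients each edge toward its higher-degree endpoint and bounds by $1/\min\{\deg(u),\deg(v)\}$; both yield $O(n)$ after the handshake rearrangement, and your explicit edge-wise union bound is a slightly cleaner way to organize the case split (the paper's opening claim about bounding ``the expected sum of degrees in $V^*$'' would, as you note, fail for $V_{\text{samp}}$, and the paper in fact switches to counting edges directly there).
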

\begin{proof}
    Our goal is to prove that the expected sum of the original degrees of nodes in $V^*$ is $O(n)$, which clearly upper bounds the number of edges in the induced subgraph. To do so, we analyze each subset individually.
    
    We first analyze the expected number of edges induced by $V_{samp}$. Let $X$ denote the random number of edges within the subgraph $G[V_{\text{samp}}]$. Let $Y_e$ be an indicator random variable for the event that edge $e$ is in $G[V_{\text{samp}}]$. To aid our analysis, we orient each edge in the graph from the endpoint with lower degree to the endpoint with higher degree. Now, consider an edge $e = (u, v)$ with $\text{deg}(u) \le \text{deg}(v)$. Vertices $u$ and $v$ are each sampled with probability at most $\frac{1}{\sqrt{\text{deg}(u)}}$. By pairwise independence, the probability of edge $e$ being in $G[V_{\text{samp}}]$ is bounded by $\frac{1}{\text{deg}(u)}$. Consequently, the expected number of edges is $\mathbb{E}[X] = \sum_{v \in V}\sum_{e \in \text{out}(v)}\mathbb{E}[Y_e] \le \sum_{v \in V}\sum_{e \in \text{out}(v)} \frac{1}{\text{deg}(u)} = O(n)$.

    Next, let $\overline{V}_{\text{good}}$ denote the set of good nodes that have no sampled neighbor and $Y$ the random number of edges incident to $\overline{V}_{\text{good}}$ in $G$. By \cref{lem:good_nodes_not_ruled}, each good node $v$ is in $\overline{V}_{\text{good}}$ with probability at most $1/\text{poly}(\text{deg}(v))$. Thus, 
        \begin{equation*}
            \E[Y] \le \sum_{v \in V} \deg(v) \cdot \Pr[v \in \overline{V}_{\text{good}}] \le \sum_{v \in V} \frac{\deg(v)}{\text{poly}(\text{deg}(v))} = O(n).
        \end{equation*}

    Finally, let the set $B_d' \subseteq \overline{B_d}$ include each unlucky bad node $u$ such that either less than $d^{0.1}$ vertices in $S_u$ are sampled or any sampled node in $S_u$ has more than $2d^{\varepsilon}$ sampled neighbors. By \cref{lem:bad_nodes_not_ruled}, each node $u$ is in ${B}_d'$ with probability at most $1/\text{poly}(d)$. Let $Z$ be the random number of edges incident to ${B}_d'$. We have
        \begin{equation*}
            \E[Z] \le \sum_{i = d_0}^{d_{\text{max}}} \sum_{u \in B_{2^i}} \deg(u) \cdot \Pr[u \in {B}_{2^i}'] \le \sum_{i = d_0}^{d_{\text{max}}} \sum_{u \in B_{2^i}} \frac{2d}{\text{poly}(d)} \le \sum_{i = d_0}^{d_{\text{max}}} |B_{2^i}| = O(n).
        \end{equation*}
\end{proof}

\subparagraph{Derandomize Sampling and Gathering Steps} We are now ready to discuss how the above Sampling and Gathering steps can be turned into a deterministic linear \mpc\ algorithm. Recall that each vertex is sampled according to a family of $k$-wise independent random variables with $k=O(1)$. A family $\mathcal{H}$ of $k$-wise independent hash functions such that $h \in \mathcal{H} : [n] \rightarrow [n^3]$ can be specified using a random seed of length $O(\log n)$, meaning that $\card{\mathcal{H}} = \poly(n)$. Each $h$ maps the $n$ vertex IDs (assumed to be from $1$ up to $n$) to an integer in $[n^3]$. Then, each vertex is sampled and belongs to $V_{\text{samp}}$ iff its ID is mapped to an integer that is at most $\floor{n^3 / \sqrt{deg(v)}}$ with respect to $h$, where the floor affects results only asymptotically. Each vertex can now locally check whether it will be included in $V^*$ for a specified hash function $h$. In fact, the machine that $v$ is assigned to stores all $v$'s neighbors and the set $S_v$ if $v$ is a lucky bad node. Therefore, it is easy to see that each node can computed the objective function $|E(G[V^*])|$ locally, and we can thus apply the distributed method of conditional expectation. Since $\card{\mathcal{H}} = \poly(n)$, after a constant number of rounds we will find a $h$ that ensures $|E(G[V^*])| = O(n)$.\\

We now turn to analyzing the MIS step. Recall that we first compute a partial MIS on the sampled bad nodes in order to rule all but a small fraction of lucky bad nodes. The next lemma explains how such an independent set is being computed.

\begin{lemma}\label{lem:bad_nodes_mis}
     Let $\hat{B_d}$ include each node $u \in \overline{B_d}$ that satisfies the property of \cref{lem:bad_nodes_not_ruled}. After the partial MIS computation, each node $u \in \hat{B_d}$ will be ruled with probability at least $1 - \frac{45}{d^{\varepsilon}}$ for all $d \in [d_0, d_{\text{max}}]$. This result depends only on the randomness used in the MIS computation.
\end{lemma}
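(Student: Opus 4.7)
The proof conditions on the randomness of the Sampling step, so that $V_{\text{samp}}$, and with it the set $T := S_u \cap V_{\text{samp}}$, is deterministic. By the hypothesis $u \in \hat{B_d}$ (which inherits the property of \Cref{lem:bad_nodes_not_ruled}), we may assume $|T| \ge d^{0.1}$ and every $v \in T$ has at most $d^{2\varepsilon}$ sampled neighbors. Since $u \in \overline{B_d}$ has a neighbor $w$ with $T \subseteq N(w)$, each $v \in T$ lies within distance two of $u$, and so $u$ is ruled as soon as one $v \in T$ enters the partial MIS. It therefore suffices to show this happens with probability at least $1 - 45/d^\varepsilon$ using only the randomness of the MIS step.

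I would realize the Luby-style rule using $k$-wise independent bits with $k$ a sufficiently large constant: each sampled bad vertex is marked independently with probability $p = 1/(c \cdot d^{2\varepsilon})$ for a large constant $c$, and a marked vertex joins the partial MIS iff no sampled bad neighbor is also marked. A pairwise-independent union bound already gives $\Pr[v \text{ joins MIS}] \ge p(1 - d^{2\varepsilon} p) \ge p/2$ for every $v \in T$; hence the number $M$ of marked vertices in $T$ has mean $\E[M] = |T| p \ge d^{1/20}/c$.

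I would conclude by combining two concentration bounds and a union bound. Since $M$ is a sum of $k$-wise independent Bernoullis, \Cref{lem:kwise_bound} yields $\Pr[M \le \E[M]/2] \le O(1/d^\varepsilon)$ for $k$ a sufficiently large constant and $d_0$ large enough. Let $B$ be the number of marked vertices in $T$ that have a marked sampled-bad neighbor, so that $M - B$ equals the number of $T$-vertices joining the MIS. Pairwise independence gives $\E[B] \le |T| \cdot d^{2\varepsilon} \cdot p^2 = \E[M]/c$, and a second-moment estimate on the dominating pair count $B' := \sum_{v \in T} \sum_{y \in N(v) \cap V_{\text{samp}}} M_v M_y \ge B$, using $4$-wise independence of the marks and a case analysis on the number of distinct indices to evaluate $\E[(B')^2]$, delivers $\Pr[B \ge \E[M]/4] \le O(1/d^\varepsilon)$. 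On the complement of both failure events, at least $M - B \ge \E[M]/4 \ge 1$ vertex of $T$ joins the MIS and thus rules $u$, and a union bound yields the stated $45/d^\varepsilon$ failure probability.

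The main difficulty is that the indicator ``$v$ joins MIS'' depends on the marks of $v$'s neighbors and hence is not a function of a single mark, which precludes a direct application of \Cref{lem:kwise_bound} to $\sum_{v \in T} \mathbb{1}[v \text{ joins MIS}]$. Decoupling the analysis into a lower-tail bound on the genuine Bernoulli sum $M$ and an upper-tail second-moment bound on the pair count $B'$ is the conceptual workaround; the $4$-wise moment computation for $B'$ then reduces to standard, if slightly tedious, bookkeeping over the overlap patterns of the four indices.
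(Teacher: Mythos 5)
Your proposal takes a genuinely different route: you replace the paper's priority-comparison Luby step (each sampled bad vertex $v$ draws $z_v \in [n^3]$ and joins iff it beats all sampled bad neighbors \emph{and} $z_v < n^3/d^{3\varepsilon}$) with a mark-and-resolve version, and you replace the paper's direct Chebyshev bound on $X = \sum_{v \in A_u} \mathbbm{1}[v \text{ joins}]$ with a decomposition $M - B$ and separate concentration arguments. The decomposition idea is sound and is in some respects cleaner to explain, but as written it has a genuine gap.

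The gap is in the treatment of neighbors from other degree classes. When you write $\Pr[v \text{ joins MIS}] \ge p(1 - d^{2\varepsilon}p)$ and $\E[B] \le |T|\cdot d^{2\varepsilon}\cdot p^2$, you are implicitly assuming that every sampled bad neighbor $w$ of $v \in T \subseteq B_d$ is marked with the \emph{same} probability $p = 1/(c\,d^{2\varepsilon})$. But the partial MIS is run once on $\bigcup_{d'} B_{d'} \cap V_{\text{samp}}$, and the sampled bad neighbors of $v$ can lie in arbitrary (in particular much smaller) degree classes $d'$. If the marking probability is degree-class-dependent, then $p_w = 1/(c\,(d')^{2\varepsilon})$ can be as large as a constant when $d' = O(1)$, and even a single such neighbor blocks $v$ with constant probability; the bound $\E[B] \lesssim \E[M]/c$ fails, and so does $M - B \ge 1$. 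If instead the marking probability is a single global value, it must be taken small enough for the densest class, in which case $\E[M]$ is too small for the lower-tail bound to beat $1/d^\varepsilon$. In either case, the decoupling does not close.

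This is exactly the difficulty the paper's priority scheme is designed to sidestep. There, the ``bad'' event for $v$ is $\{z_{v'} \le z_v\} \cap \{z_v < n^3/d^{3\varepsilon}\}$, which is contained in $\{z_{v'} < n^3/d^{3\varepsilon}\} \cap \{z_v < n^3/d^{3\varepsilon}\}$; by pairwise independence this has probability at most $d^{-6\varepsilon}$, \emph{uniformly over the degree class of $v'$}, because only $v$'s own threshold enters. Summing over the $\le d^{2\varepsilon}$ sampled neighbors gives the clean $d^{-4\varepsilon}$ correction, after which $\E[X] \ge d^\varepsilon/3$ and a straightforward variance bound plus Chebyshev finishes, using only pairwise independence (not $k$-wise). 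To salvage your marking variant you would need either to tie the blocking event to $v$'s own threshold (effectively re-deriving the priority scheme), or to add an argument that sampled bad neighbors of $v$ cannot concentrate in low degree classes, neither of which is currently in the write-up.
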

\begin{proof}
    We analyze one step of (a variation of) Luby's algorithm that builds an independent set $\mathcal{I}$ on the set of sampled bad vertices $\bigcup_{d}B_d \cap V_{\text{samp}}$. 
    We will fix a seed specifying a hash function from a pairwise independent family $\mathcal{H}$. Let $v \in (\bigcup_{d}B_d \cap V_{\text{samp}})$. An hash function $h$ maps node $v$ to a value $z_v \in [n^3]$. Then, $v$ joins the independent set $\mathcal{I}$ iff $z_v < z_w$ for all $w \sim v$ \textit{and} $z_v < \frac{n^3}{d^{3\varepsilon}}$, where $w \in N(v) \cap (\bigcup_{d}B_d \cap V_{\text{samp}})$.

    By \cref{lem:bad_nodes_not_ruled}, each node $u \in \hat{B_d}$ has at least $d^{0.1}$ nodes from $S_u$ that are sampled, each of which has at most $d^{2\varepsilon}$ sampled neighbors. For the purpose of the analysis, let the set $A_u$ include exactly $d^{0.1} = d^{4\varepsilon}$ of such nodes and let $\{ X_v \}_{v \in A_u}$ be the random variables denoting the event that $v$ joins $\mathcal{I}$. We denote $X = \sum_{v \in A_u} X_{v}$ as their sum. For any $v$, we have
    \begin{align*}
        \frac{1}{d^{3\varepsilon}} - \frac{1}{n^3} \le \Pr\left[z_v < \frac{n^3}{d^{3\varepsilon}}\right] \le \frac{1}{d^{3\varepsilon}}.
    \end{align*}
    By pairwise independence, 
    \begin{align*}
        \Pr[X_v = 1] &\ge \Pr\left[z_v < \frac{n^3}{d^{3\varepsilon}}\right] - \sum_{\substack{v' \in N(v) \cap S(B)}}\Pr\left[z_{v'} \le z_v < \frac{n^3}{d^{3\varepsilon}}\right] \ge \frac{1}{d^{3\varepsilon}} - \frac{1}{n^3} - \frac{d^{2\varepsilon}}{d^{6\varepsilon}} \ge \frac{1}{3d^{3\varepsilon}}.
    \end{align*} 
    It follows that $\E[X] = \sum_{v \in A_u} \Pr[X_v = 1] \ge \frac{d^{\varepsilon}}{3}.$
    Our goal is now to bound $\Pr\left[X = 0\right]$. Observe that for any two vertices $v, v' \in A_u$, we have that $$\E[X_v X_{v'}] \le \Pr\left[z_v < \frac{n^3}{d^{3\varepsilon}} \, \cap \, z_{v'} < \frac{n^3}{d^{3\varepsilon}}\right] \le d^{-6\varepsilon},$$ by pairwise independence. 
    Thus, we get
    \begin{align*}
        \frac{\Var[X]}{\E[X]^2} &\le \frac{\sum_{v \in A_u} \Var[X_v] + \sum_{v, v' \in A_u} \Cov[X_v, X_{v'}]}{\E[X]^2}.
    \end{align*}
    We know that
    \begin{align*}
        \sum_{v \in A_u} \Var[X_v] &\le d^{4\varepsilon} \cdot \Pr[X_v = 1] (1 - \Pr[X_v = 1]) \le d^{4\varepsilon} \cdot \frac{1}{3d^{3\varepsilon}} = \frac{d^{\varepsilon}}{3}, \\
        \sum_{v, v' \in A_u} \Cov[X_v, X_{v'}] &\le d^{8\varepsilon} (\E[X_v X_{v'}] - \E[X_v]\E[X_{v'}]) \le d^{8\varepsilon} (d^{-6\varepsilon} - 1/9d^{6\varepsilon}) \le d^{2\varepsilon}.
    \end{align*}
    Therefore,
    \begin{align*}
        \Var[X] &\le \frac{d^{\varepsilon}}{3} + d^{2\varepsilon} \le \frac{4d^{\varepsilon}}{3}, \text{ and }
        \frac{\Var[X]}{\E[X]^2} \le \frac{\frac{4d^{\varepsilon}}{3}}{\left(\frac{d^{\varepsilon}}{3}\right)^2} = \frac{4d^{\varepsilon}}{3} \cdot \frac{9}{d^{2\varepsilon}} = \frac{36}{d^{\varepsilon}}.
    \end{align*}
    Applying Chebyshev's inequality, we have
    \begin{align*}
        \Pr[X = 0] \le \Pr\left[|X - \E[X]| \ge \E[X]\right] \le \frac{\Var[X]}{\E[X]^2} \le \frac{45}{d^{\varepsilon}}.
    \end{align*}
\end{proof}

The above lemma turns out not to be sufficient to derandomize our MIS step. In fact, we need to show that all degree classes of lucky bad nodes have a high enough chance of being ruled \textit{simultaneously}. This is due to the fact that in the derandomization process, we can control only \textit{one} objective function and not $O(\log \Delta)$ as the number of degree classes would appear to require. In the next lemma, we show how to define a pessimistic estimator that solves this issue.

\begin{lemma}\label{lem:bad_nodes_mis_grouped}
    After the partial MIS computation, all but at most $\frac{|\overline{B}_d|}{d^{\Omega(1)}}$ nodes will be ruled in expectation, for all $d$ simultaneously.
\end{lemma}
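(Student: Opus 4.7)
The plan is to combine the per-class guarantees of \cref{lem:bad_nodes_mis} into a single pessimistic estimator whose expectation controls all $O(\log \Delta)$ degree classes at once, so that one round of derandomization via the method of conditional expectation will later suffice. For each $d \in \{2^{d_0},\ldots,2^{d_{\max}}\}$, let $U_d$ denote the random number of nodes in $\hat{B}_d$ that remain unruled after the partial MIS. \Cref{lem:bad_nodes_mis} together with linearity of expectation immediately gives $\E[U_d] \le 45\,|\overline{B}_d|/d^{\varepsilon}$; the real task is to repackage these individual inequalities into a form that the derandomization framework can exploit.

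To that end, I would introduce the weights $\alpha_d := d^{\varepsilon/2}/\max(|\overline{B}_d|,1)$ and define the pessimistic estimator
\begin{equation*}
    \Phi \;:=\; \sum_d \alpha_d \cdot U_d \;=\; \sum_d \sum_{u \in \hat{B}_d} \alpha_{d} \cdot \mathbf{1}[u \text{ is unruled}].
\end{equation*}
Each indicator $\mathbf{1}[u \text{ is unruled}]$ is a quantity local to a single node (its degree class and the pre-computed counts $|\overline{B}_d|$ are known to the machine storing $u$), so $\Phi$ is a sum of locally computable terms as required by the derandomization framework outlined in the preliminaries. Plugging in the individual bound yields
\begin{equation*}
    \E[\Phi] \;\le\; 45 \sum_d \frac{1}{d^{\varepsilon/2}} \;=\; O(1),
\end{equation*}
because $d$ ranges over a geometric progression starting from the constant $2^{d_0}$.

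Since every summand $\alpha_d U_d$ is non-negative, any realization of the randomness with $\Phi \le \E[\Phi]$ yields
\begin{equation*}
    U_d \;\le\; \frac{\Phi}{\alpha_d} \;\le\; O(1) \cdot \frac{|\overline{B}_d|}{d^{\varepsilon/2}} \;=\; \frac{|\overline{B}_d|}{d^{\Omega(1)}}
\end{equation*}
for \emph{every} $d$ simultaneously, which is exactly the claim of the lemma (the case $|\overline{B}_d| = 0$ being trivial). The main obstacle I anticipate is calibrating the weights: $\alpha_d$ must be small enough that $\sum_d \alpha_d \E[U_d]$ collapses to a constant independent of $\Delta$ (and hence independent of the number of degree classes), yet large enough that dividing by $\alpha_d$ still leaves a $d^{\Omega(1)}$-factor gain per class. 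Splitting the $d^{\varepsilon}$ slack of \cref{lem:bad_nodes_mis} evenly between $\alpha_d$ and the final bound, together with the compensating normalization by $|\overline{B}_d|$, strikes this balance; any alternative calibration either diverges the sum over degree classes or weakens the per-class bound to a fraction of $n$ rather than of $|\overline{B}_d|$.
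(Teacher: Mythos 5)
Your proposal is correct and takes essentially the same route as the paper: you define the same weighted pessimistic estimator (your $\Phi$ with weights $\alpha_d = d^{\varepsilon/2}/|\overline{B}_d|$ is exactly the paper's $Q$), show its expectation is $O(1)$ by a geometric sum over degree classes, and read off the per-class bound by dividing out the weight. The only cosmetic difference is that you count unruled nodes in $\hat{B}_d$ rather than in $\overline{B}_d$; to fully match the lemma statement one should add the (already derandomized) contribution $|\overline{B}_d \setminus \hat{B}_d| \le |\overline{B}_d|/\poly(d)$ from \cref{lem:bad_nodes_not_ruled}, which the paper folds into its $X_d$, but this is a negligible bookkeeping omission.
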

\begin{proof}
    Let us first reason about a fixed $d$ and then about all $d$'s simultaneously. 
    
    Recall that $\hat{B_d}$ include each node $u \in \overline{B_d}$ that satisfies the property of \cref{lem:bad_nodes_not_ruled}. There are at most $\frac{|\overline{B}_d|}{\poly(d)}$ vertices in $\overline{B}_d \setminus \hat{B_d}$ by \cref{lem:bad_nodes_not_ruled}. Then, any vertex in $\hat{B_d}$ is ruled with probability at least $1 - \frac{45}{d^{\varepsilon}}$ by \cref{lem:bad_nodes_mis}. Therefore, by linearity of expectation, the number of non-ruled vertices in is at most $45|\overline{B}_d| / d^{\varepsilon}$. 
    
    Our goal is now to define a \textit{single} objective function whose expected value ensures that the same asymptotic result holds for \textit{all} $d$ simultaneously. Let $X_d$ be the random number of unruled nodes in $\overline{B}_d$, for each $d$. We define our objective function $Q$, which will serve as a ``pessimistic estimator'', as a weighted sum of the $X_d$'s as follows.
    \begin{align*}
        Q = \sum_{i = d_0}^{d_{\text{max}}} X_{2^i} \cdot \frac{2^{i \cdot {\frac{\varepsilon}{2}}}}{|\overline{B}_{2^i}|},
    \end{align*}
    so that we get
    \begin{align*}
        \E[Q] &= \sum_{i = d_0}^{d_{\text{max}}} \E[X_{2^i}] \cdot \frac{2^{i \cdot {\frac{\varepsilon}{2}}}}{|\overline{B}_{2^i}|} \le \sum_{i = d_0}^{d_{\text{max}}} \frac{45|\overline{B}_{2^i}|}{2^{i \varepsilon}} \cdot \frac{2^{i \cdot {\frac{\varepsilon}{2}}}}{|\overline{B}_{2^i}|} = \sum_{i = d_0}^{d_{\text{max}}} \frac{45}{2^{i\varepsilon/2}} = O(1),
    \end{align*}
    where the convergency follows from choosing a sufficiently large constant $d_0 = O(\varepsilon^{-1})$. Observe that the expected value of $Q$ ensures that, for each set $\overline{B}_d$, the number of nodes which are not ruled after running our Luby's step is $X_d \le \E[Q] \cdot \frac{|\overline{B}_{d}|}{d^{\varepsilon/2}} = \frac{|\overline{B}_d|}{d^{\Omega(1)}}$.
\end{proof}

\subparagraph{Deterministic MIS Step} We now present an efficient derandomization of the above partial MIS computation in the linear \mpc\ regime. As discussed in \cref{lem:bad_nodes_mis}, our family $\mathcal{H}$ of pairwise independent hash functions has size $\card{\mathcal{H}} = \poly(n)$. Note that each lucky bad node $u$ can store in a single machine its set $S_u$ and all of their sampled neighbors since $\card{S_u} \cdot d^{2 \varepsilon} = O(d) = O(deg(u))$. Then, each vertex $u$ can check whether it will be ruled under a specified hash function $h$. Therefore, we can compute $u$'s contribution to $Q(h)$ locally, where $Q(h)$ is the objective function of \cref{lem:bad_nodes_mis_grouped} under a specified hash function $h$. This allows us to apply the distributed method of conditional expectation with objective $Q$ to find a good hash function with $Q(h) = O(1)$ in a constant number of rounds.

\subparagraph{Counting the bad nodes} Let $V_{\ge d}$ denote the set of all nodes in $G$ with initial degree at least $d$, and let the set $B_d^* \eqdef B_d \setminus \overline{B}_d$. It remains to prove that the set $B_d^*$ contains only a small fraction of nodes. The next lemma is equivalent to Lemma~9 of \cite{CKPU23} up to some parameters change.

\begin{lemma}\label{lem:bad_nodes_counting}
For any degree $d \in[2^{d_0}, 2^{d_{\text{max}}}]$, we have that $|{B}_d^*| \le 12|V_{\ge d}|/d^{0.4}$.
\end{lemma}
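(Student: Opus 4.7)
The plan is a simple double-counting of edges between $B_d^*$ and $V_{\ge d}$. The two ingredients are: (a) every bad node $u \in B_d^*$ has most of its neighbors in $V_{\ge d}$ (this uses the definition of ``bad''), and (b) every neighbor $w$ of such a $u$ has only few neighbors in $B_d$ (this uses $u \notin \overline{B}_d$).

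First I would prove (a). For any $u \in B_d^*$, since $u$ is bad, $\sum_{w \in N(u)} 1/\sqrt{\deg(w)} < \deg(u)^{\varepsilon} \le (2d)^{\varepsilon}$. Any neighbor $w$ of $u$ with $\deg(w) < d$ contributes more than $1/\sqrt{d}$ to this sum, hence the number of neighbors of $u$ lying outside $V_{\ge d}$ is at most $\sqrt{d}\,(2d)^{\varepsilon} = O(d^{1/2+\varepsilon})$. With $\varepsilon = 1/40$, and $d_0$ chosen sufficiently large, this quantity is at most $d/2$, so $u$ has at least $\deg(u) - d/2 \ge d/2$ neighbors in $V_{\ge d}$.

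Next I would prove (b). By definition of $\overline{B}_d$, a node $u \in B_d$ belongs to $B_d^* = B_d \setminus \overline{B}_d$ precisely when \emph{every} neighbor $w$ of $u$ satisfies $|N(w) \cap B_d| < 6 d^{0.6}$. In particular, whenever $w \in V_{\ge d}$ is adjacent to at least one vertex of $B_d^*$, we have $|N(w) \cap B_d^*| \le |N(w) \cap B_d| < 6 d^{0.6}$.

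Finally I would combine (a) and (b) by counting the edges of $G$ with one endpoint in $B_d^*$ and the other in $V_{\ge d}$. Summing over $u \in B_d^*$, using (a), the number of such edges is at least $(d/2)\, |B_d^*|$. Summing over $w \in V_{\ge d}$, using (b), it is at most $6 d^{0.6}\,|V_{\ge d}|$. Therefore
\begin{equation*}
    \frac{d}{2}\, |B_d^*| \;\le\; 6\, d^{0.6}\, |V_{\ge d}|,
\end{equation*}
which rearranges to $|B_d^*| \le 12\, |V_{\ge d}| / d^{0.4}$, as claimed. I do not expect any real obstacle: the only subtle point is ensuring $d/2$ in step (a), which is handled by choosing the constant $d_0$ large enough relative to $\varepsilon = 1/40$.
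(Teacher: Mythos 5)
Your proof is correct and follows essentially the same strategy as the paper: a double-counting of edges between $B_d^*$ and $V_{\ge d}$, where the lower bound comes from each bad node $u \in B_d^*$ having at least $d/2$ neighbors in $V_{\ge d}$ (derived from the badness inequality $\sum_{w \in N(u)} 1/\sqrt{\deg(w)} < \deg(u)^{\varepsilon}$) and the upper bound comes from the fact that $u \notin \overline{B}_d$ forces every neighbor $w$ to satisfy $|N(w)\cap B_d| < 6d^{0.6}$. The only superficial difference is that the paper proves the $d/2$ neighbors have degree at least $d' = d^{2(1-\varepsilon)}/4$ and then notes $d' \ge d$, whereas you bound directly the number of neighbors with degree below $d$; both are the same calculation.
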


\begin{proof}
For a bad node $v$, it is easy to see by contradiction that at least $d/2$ of $v$'s neighbors have degree at least $d^{2(1 - \varepsilon)}/4$ (see also Lemma~8 of \cite{CKPU23}). Let $d' = \frac{d^{2(1 - \varepsilon)}}{4}$. Therefore, any node $v \in B_d^*$ has at least $d/2$ neighbors in $V_{\ge d'}$. Furthermore, any node in $V_{\ge d'}$ neighboring a node in $B_d^*$ has at most $6d^{0.6}$ edges connecting to nodes in $B_d \supseteq B_d^*$. As a result of these observations, we derive the following inequality:
\begin{align*}
    d/2 \cdot |{B}_d^*| \le 6|V_{\ge d'}|\cdot d^{0.6},
\end{align*}
which together with the fact that $d' \ge d$, for $d$ large enough, proves the lemma.
\end{proof}

\subparagraph{Bounding Total Runtime} In the above paragraphs, we showed how to achieve deterministically the properties required by our three-step algorithm outlined at the beginning of this section. We now rove that repeating this process $O(1)$ times reduces the size of the graph to $O(n/\Delta)$, implying that the remaining nodes can be collected and solved for locally.

\begin{lemma}\label{lem:global_progress}
At the end of the first iteration, the number of remaining uncovered vertices with degree at least $d$, denoted by $V_{\ge d}^{(1)}$, satisfies
\begin{equation*}
    |V_{\ge d}^{(1)}| \le |V_{\ge d}|/ d^{\varepsilon'}.
\end{equation*}
\end{lemma}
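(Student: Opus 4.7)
The plan is to bound $|V_{\ge d}^{(1)}|$ by decomposing the set of vertices with initial degree at least $d$ by degree classes and then, for each class, applying the three output properties established earlier in the section. Concretely, I would partition $V_{\ge d}$ into the good vertices with degree in $[d, \infty)$ and the bad-vertex classes $B_{d'}$ for $d' = 2^i$ with $i \in [\lceil \log d \rceil, d_{\text{max}}]$, and further split each $B_{d'}$ into its lucky part $\overline{B}_{d'}$ and unlucky part $B_{d'}^* = B_{d'} \setminus \overline{B}_{d'}$.

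Next, I would account for the survivors in each part. By the first output property (good nodes are all ruled after the MIS step), no good node contributes to $V_{\ge d}^{(1)}$. By \cref{lem:bad_nodes_mis_grouped}, the number of surviving lucky bad nodes in class $d'$ is at most $|\overline{B}_{d'}|/(d')^{\Omega(1)}$, and by \cref{lem:bad_nodes_counting}, the entire unlucky class satisfies $|B_{d'}^*| \le 12 |V_{\ge d'}|/(d')^{0.4}$. Using $|\overline{B}_{d'}| \le |V_{\ge d'}|$ and the monotonicity $|V_{\ge d'}| \le |V_{\ge d}|$ for $d' \ge d$, each class contributes at most $|V_{\ge d}|/(d')^{\varepsilon''}$ survivors for some absolute constant $\varepsilon'' > 0$ (obtained as the minimum of the exponent from \cref{lem:bad_nodes_mis_grouped} and $0.4$, up to constants).

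Finally, I would sum over the $O(\log \Delta)$ relevant values of $d'$:
\begin{equation*}
    |V_{\ge d}^{(1)}| \;\le\; \sum_{i : 2^i \ge d} O\!\left(\frac{|V_{\ge d}|}{2^{i \varepsilon''}}\right) \;=\; |V_{\ge d}| \cdot O\!\left(\frac{1}{d^{\varepsilon''}}\right),
\end{equation*}
where the geometric series converges because $\varepsilon'' > 0$. Absorbing the leading constant into the exponent for $d \ge 2^{d_0}$ with $d_0$ taken sufficiently large yields $|V_{\ge d}^{(1)}| \le |V_{\ge d}|/d^{\varepsilon'}$ for any $\varepsilon' < \varepsilon''$, which is the claimed bound.

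I do not anticipate a serious obstacle here since all the heavy lifting (the per-class progress bounds, both for lucky and unlucky bad nodes, and the handling of good nodes) has already been done in \cref{lem:bad_nodes_mis_grouped} and \cref{lem:bad_nodes_counting}; the only mild subtlety is to verify that the $\Omega(1)$ exponent from the pessimistic-estimator analysis combines correctly with the $0.4$ exponent from the counting lemma, and that $d_0$ is chosen large enough so that the constant factor accumulated from the geometric sum can be swapped for a slightly smaller effective exponent $\varepsilon'$.
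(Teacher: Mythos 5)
Your proposal is correct and follows essentially the same route as the paper: decompose $V_{\ge d}^{(1)}$ by bad-node degree class, bound the unlucky part via \cref{lem:bad_nodes_counting} and the surviving lucky part via \cref{lem:bad_nodes_mis_grouped}, use $|\overline{B}_{d'}| \le |V_{\ge d'}| \le |V_{\ge d}|$, and close with a geometric sum dominated by its first term. Your write-up is, if anything, slightly cleaner in its indexing (the paper's displayed sum writes $|\overline{B}_d|/2^{\Omega(i)}$ where it plainly means $|\overline{B}_{2^i}|/2^{\Omega(i)}$, as you have it) and in explicitly invoking the monotonicity of $|V_{\ge d'}|$.
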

\begin{proof}
The remaining uncovered vertices are only bad nodes. An uncovered bad node of degree $[d, 2d)$ can be either in  ${B}_d^*$ (\cref{lem:bad_nodes_counting}) or remained uncovered after running the deterministic MIS step (\cref{lem:bad_nodes_mis_grouped}). Over all $d, \ldots, 2^{d_{\text{max}}}$, this leads to:
\begin{align*}
    |V_{\ge d}^{(1)}| \le \sum_{i = \log d}^{d_{\text{max}}} |{B}_{2^i}^*| + \frac{|\overline{B}_d|}{2^{\Omega(i)}} \le \sum_{i = \log d}^{d_{\text{max}}} \frac{12|V_{\ge 2^i}|}{2^{0.4\cdot i}} + \frac{|\overline{B}_d|}{2^{\Omega(i)}} \le |V_{\ge d}| \sum_{i = \log d}^{d_{\text{max}}} \frac{1}{2^{\Omega(i)}} = \frac{|V_{\ge d}|}{d^{\Omega(1)}},
\end{align*}
where the last inequality follows from  $|\overline{B}_d| \le |V_{\ge d}|$, and the final bound is due to the geometric sum being asymptotically dominated by the first term.
\end{proof}

Having established, in \cref{lem:global_progress}, the progress made at each iteration by our three-step process, we can now apply a simple induction to show the desired bound on the progress made after several iterations.

\begin{lemma}
After $O(1)$ iterations, the graph induced by uncovered nodes has $O(n)$ edges.
\end{lemma}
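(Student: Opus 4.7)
The plan is to apply \cref{lem:global_progress} inductively $t = O(1)$ times and then bound the edge count of the residual subgraph by summing contributions across degree classes.

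First I would prove, by induction on $t$, that after $t$ iterations of the three-step process, the set $V_{\ge d}^{(t)}$ of uncovered vertices with (current) degree at least $d$ satisfies $|V_{\ge d}^{(t)}| \le |V_{\ge d}|/d^{t \cdot \varepsilon'}$ for every $d \ge 2^{d_0}$. The base case $t = 1$ is exactly \cref{lem:global_progress}. For the inductive step, I would apply \cref{lem:global_progress} to the subgraph induced by the vertices uncovered after iteration $t-1$: the bad-node partition $B_d$, the lucky-bad sets $\overline{B}_d$, the sets $\hat{B}_d$ from \cref{lem:bad_nodes_mis}, and the pessimistic estimator $Q$ of \cref{lem:bad_nodes_mis_grouped} can all be defined analogously in this residual graph, and the arguments depend only on structural properties inherited by induced subgraphs. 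Composing gives $|V_{\ge d}^{(t)}| \le |V_{\ge d}^{(t-1)}|/d^{\varepsilon'} \le |V_{\ge d}|/d^{t \varepsilon'}$.

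Next I would bound the number of edges in the residual subgraph using the identity
\begin{equation*}
    |E| \;=\; \tfrac{1}{2}\sum_{v} \deg(v) \;=\; \tfrac{1}{2} \sum_{d \ge 1} |V_{\ge d}^{(t)}|.
\end{equation*}
Splitting at $2^{d_0}$, the contribution from $d < 2^{d_0}$ is at most $2^{d_0} \cdot n = O(n)$, which accounts for the constant-degree nodes already handled separately. For $d \ge 2^{d_0}$, the inductive bound together with $|V_{\ge d}| \le n$ yields
\begin{equation*}
    \sum_{d \ge 2^{d_0}} |V_{\ge d}^{(t)}| \;\le\; n \sum_{d \ge 2^{d_0}} d^{-t\varepsilon'},
\end{equation*}
which is a convergent series whenever $t \varepsilon' > 1$. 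Choosing any constant $t > 1/\varepsilon'$ therefore gives a total edge count of $O(n)$.

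The main subtlety is justifying the inductive step: \cref{lem:global_progress} was stated for a single application starting from the input graph $G$, but we need to re-apply it to successive residual subgraphs. This should go through essentially unchanged since none of the preceding lemmas relies on the starting graph being $G$ specifically --- only on having a graph in which degrees, the good/bad partition, the lucky-bad sets $S_u$, and the estimator $Q$ can be defined and derandomized in $O(1)$ rounds. Once this is verified, the rest of the argument is a routine geometric-sum calculation.
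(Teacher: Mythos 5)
Your proposal is correct and follows essentially the same approach as the paper: both establish $|V_{\ge d}^{(t)}| \le |V_{\ge d}|/d^{t\varepsilon'}$ by induction on the number of iterations via \cref{lem:global_progress}, pick $t = O(1/\varepsilon')$ so the exponent exceeds $1$, and then bound the residual edge count by a convergent geometric sum over degree scales. Your layer-cake identity $|E| = \tfrac12\sum_{d\ge 1}|V_{\ge d}^{(t)}|$ is only a cosmetic variant of the paper's per-degree-class bound, and your explicit flagging of why \cref{lem:global_progress} re-applies to residual subgraphs is a sound (and slightly more careful) account of the step the paper calls ``straightforward.''
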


\begin{proof}
Let $V_{\ge d}^{(k)}$ denote the number of remaining uncovered vertices with degree at least $d$ at iteration $k$. Our goal is to prove that after $k$ iterations, it holds that $V_{\ge d}^{(k)} \le V_{\ge d}/d^{k\varepsilon'}$ so that for $k = O(1/\varepsilon')$, we get $V_{\ge d}^{(k)} \le V_{\ge d}/d^{1.1}$. The base case for $k=1$ follows from \cref{lem:global_progress}. Now, let us assume that $V_{\ge d}^{(k-1)} \le V_{\ge d}/d^{(k-1)\varepsilon'}$. By a straightforward application of \cref{lem:global_progress}, we have that $V_{\ge d}^{(k)} \le |V_{\ge d}^{(k)}|/ d^{\varepsilon'} \le V_{\ge d}/d^{k\varepsilon'}$, as desired. Now, since the number of nodes with degree $[d, 2d)$ is upper bounded by $|V_{\ge d}|$, the total number of edges is at most $\sum_{i = \log d_0}^{\log d_{\text{max}}} V_{\ge d} \cdot 2^{i+1 - 1.1\cdot i} = \sum_{i = \log d_0}^{\log d_{\text{max}}} O(n/2^{0.1\cdot i}) = O(n)$.
\end{proof}

\section{Deterministic 2-Ruling Set in Sublinear MPC}\label{sed:sublinear}

In this section, we show that for an input graph with maximum degree $\Delta$, a $2$-ruling set can be computed deterministically in the strongly sublinear memory regime of MPC in $\tilde O(\log^{1/2} n)$ rounds.

We start by introducing a simple, deterministic, constant-round routine that reduces the size of each high-degree neighborhood by a $\sqrt{\Delta}$-factor, where high-degree refers to node with degree at least $\log(n) \cdot \Delta^{0.6}$. For ease of exposition, assume that high-degree vertices form a set $U$, and that $V$ is the set of all vertices (including high-degree vertices) that are being downsampled. Therefore, we reason about a bipartite graph $G = (U \sqcup V, E)$, where each node in $u \in U$ is connected to each vertex $v \in N_G(u)$ in the other part. Our goal is to ensure that each vertex $u$ has roughly $N_G(u)/\sqrt{\Delta}$ neighbors deterministically. For simplicity, in the next lemma, we make two assumptions: (i) the neighbors of each vertex fit into a single machine, and defer the other case to \cref{lem:one-sampling-high-degr}; (ii) we are given a certain coloring of $G$ that we discuss how to achieve at the end of this section.
 
\begin{lemma}\label{lem:bi-partite-small-degree}
    Let $G$ be a graph with bipartition $V(G) = U \sqcup V$ and $\Delta$ be an upper bound on the maximum degree of any node in $U$ such that $\Delta \in O(n^\alpha)$ for some $\alpha < 1$. 
    Furthermore, assume that each node in $V$ is given a color out of a palette of $O(\Delta^6)$ colors, such that any two distinct nodes $v,v' \in V$ that have a common neighbor in $U$ are assigned distinct colors. 
    Then, there exists a deterministic constant-round sublinear MPC algorithm that computes a subset $V^{sub} \subseteq V$ such that for any node $u \in U$ with $deg_G(u) \geq \log(n) \cdot \Delta^{0.6}$, it holds that $|N_G(u) \cap V^{sub}| \in \left[ \frac{1}{3 \sqrt{\Delta}}|N_G(u)|,\frac{1}{\sqrt{\Delta}}|N_G(u)|\right]$. The global space usage is linear in the input size.
\end{lemma}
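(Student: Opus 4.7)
The plan is to derandomize the natural Bernoulli sampling that picks each node in $V$ with probability $p := 1/(2\sqrt{\Delta})$, leveraging the given coloring of $V$ to keep the seed length at $O(\log n)$ bits.

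I would set up the sampling per color class rather than per vertex. Let $C$ denote the color palette, $|C| = O(\Delta^6)$. Since any two $v, v' \in V$ with a common $U$-neighbor receive distinct colors, every $u \in U$ has exactly $|N_G(u)|$ neighbors lying in as many distinct color classes (one per class). I draw a $k$-wise independent hash function $h : C \to \{0,1\}^{\ell}$ from \Cref{lem:k-wise} with $\ell = \Theta(\log \Delta)$ and $k$ just large enough that the tail bound in the next step beats $n^{-3}$. Declaring $v$ sampled iff $h(\chi(v)) < \tau := \lfloor 2^\ell p \rfloor$ (where $\chi(v)$ is the color of $v$) defines $V^{sub}$.

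Next I would apply \Cref{lem:kwise_bound} pointwise over high-degree vertices. Fix $u \in U$ with $d := \deg_G(u) \geq \log n \cdot \Delta^{0.6}$; then $X_u := |N_G(u) \cap V^{sub}|$ is a sum of $d$ $k$-wise independent Bernoullis with parameter $p' = \tau/2^\ell \in [(1 - o(1))p,\,p]$, so $\mu_u := \E[X_u] = p' d$ lies in $[(1 - o(1)) \tfrac{|N_G(u)|}{2\sqrt{\Delta}},\,\tfrac{|N_G(u)|}{2\sqrt{\Delta}}]$ and $\mu_u = \Omega(\log n \cdot \Delta^{0.1})$. A deviation of $\varepsilon = 1/4$ around $\mu_u$ fits comfortably inside the target interval $[|N_G(u)|/(3\sqrt{\Delta}),\,|N_G(u)|/\sqrt{\Delta}]$, since $(1 - \tfrac14)/2 > 1/3$ and $(1 + \tfrac14)/2 < 1$; choosing $k$ of order $\log n / \log \mu_u$ makes the failure probability $\leq n^{-3}$. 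A routine check shows that this $k$ keeps the total seed length $k(\ell + \log|C|) = O(\log n)$ in both the polylogarithmic and polynomial regimes of $\Delta$ (the only regimes in which the lemma is non-vacuous, since $|N_G(u)| \le \Delta$ forces $\Delta \ge \log^{5/2} n$).

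For derandomization, I would use the pessimistic estimator
\begin{equation*}
    Q(h) \;:=\; \sum_{u \in U :\, \deg_G(u) \geq \log n \cdot \Delta^{0.6}} \mathbf{1}\!\left[\,X_u(h) \notin \left[\tfrac{|N_G(u)|}{3\sqrt{\Delta}},\ \tfrac{|N_G(u)|}{\sqrt{\Delta}}\right]\,\right].
\end{equation*}
By the tail bound, $\E_h[Q] \leq |U| \cdot n^{-3} < 1$, so a seed $h^\star$ with $Q(h^\star) = 0$ exists. The family $\mathcal{H}$ has size $\poly(n)$ by \Cref{lem:k-wise}, and since each $u$'s neighborhood (hence the colors of its neighbors) is stored on a single machine by assumption, each $u$ can locally evaluate its contribution to $Q(h)$ for any candidate $h$. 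The standard sublinear-MPC method-of-conditional-expectations of \cite{CPS17, CC22, CDP21mis} then produces $h^\star$ in $O(1)$ rounds with linear global space. The main obstacle I expect is reconciling bounded-independence concentration with seed length: a naive hash function over the $|V|$ individual vertices would require $\Omega(k \log n)$ bits, and when $\Delta$ is polylogarithmic, $k = \omega(1)$ is forced by the tail bound, pushing the seed past $O(\log n)$ and breaking the constant-round derandomization. The coloring is precisely what unblocks this: it collapses the hash-function domain to $|C| = \poly(\Delta)$ while preserving that each high-degree $u$ sees $d$ distinct classes, so that the $d$ indicators governing $X_u$ remain $k$-wise independent under a seed of only $O(\log n)$ bits.
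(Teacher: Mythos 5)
Your proposal is correct and follows essentially the same route as the paper's proof: both hash over color classes with a $k$-wise independent family of seed length $O(\log n)$ (with $k = \Theta(\log n / \log \Delta)$), both invoke Lemma~\ref{lem:kwise_bound} to bound the deviation by a constant factor of the expectation (you use $\varepsilon = 1/4$ around $p = 1/(2\sqrt{\Delta})$, the paper uses $\varepsilon = 1/2$ around $p = 2/(3\sqrt{\Delta})$), and both derandomize via a pessimistic estimator counting deviating high-degree vertices with the distributed method of conditional expectations. The remaining differences—threshold comparison $h(\chi(v)) < \tau$ versus equality test $h(v) = 1$—are cosmetic.
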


\begin{proof}
    Let us assume that each node $v \in V$ knows its own color $c_v$ of a coloring satisfying the above properties. Then, nodes in $V$ apply a hash function $h$ from a $k$-wise independent family $\mathcal{H}$ that maps each color to an integer in $[\lceil 3\sqrt{\Delta}/2 \rceil]$. A node $v$ is then sampled under $h$ iff $h(v) = 1$, which occurs with probability $1/\lceil 3\sqrt{\Delta}/2\rceil$, where the ceil affects our results only asymptotically. We choose $k = 4 c \log_{\Delta} n$, for constant $c > 0$, so that the seed length to select a hash function from $\mathcal{H}$ is at most $\ell = O(\log_{\Delta}n) \cdot \max\{O(\log \Delta^6), O(\log \sqrt{\Delta})\} = O(\log n)$, i.e., the family $\mathcal{H}$ has size $\poly(n)$. 
    
    We prove that for each vertex $u \in U$ with degree larger than $\log n \cdot \Delta^{0.6}$, the probability of having between $\frac{1}{3\sqrt{\Delta}}|N(u)|$ and $|N(u)|/\sqrt{\Delta}$ neighbors within $V^{sub}$ is at least $1 - \frac{1}{n^c}$, i.e., the count of $v$'s neighbors in $V^{sub}$ deviates by at most $\frac{1}{3\sqrt{\Delta}}|N(u)|$.
    For each neighbor $ v $ of $ u $, let $ X_v $ be an indicator random variable for the event $ v \in V^{sub} $. Define $ X = \sum_{v \in N(u)} X_v $ as the number of neighbors of $ u $ in $ V^{sub} $. Then, $ \mu  = \mathbb{E}[X] = \frac{2|N(u)|}{3\sqrt{\Delta}} \ge c\log n \Delta^{0.1}.$
    By applying \Cref{lem:kwise_bound}, we have:
    \begin{align*}
        \Pr[|X - \mu| \ge \mu/2] &\le 8\left(\frac{4k\mu + 4k^2}{\mu^2}\right)^{k/2} \le 8\left(\frac{16c^2\Delta^{0.1}\log^2 n + 32c^2\log^2 n}{\Delta^{0.2}c^2\log^2 n}\right)^{k/2}\\
        &\le 8\left(\frac{1}{\Delta^{0.1}}\right)^{\frac{4c}{2}\cdot\frac{\log n}{\log \Delta}} \le \frac{1}{n^{2c}}.
    \end{align*}
    Therefore, the expected number of high-degree vertices in $U$ whose count of sampled neighbors deviates by more than $\mu/2$ is at most $n^{2c - 1} < 1$. This means that we can apply the method of conditional expectation in a distributed fashion with as objective function the number of bad nodes, i.e., those whose sampled neighborhood deviates from the expectation by more than half. Since the memory capacity of each machine is $O(n^{\alpha})$, each machine can compute locally the contribution to the objective of all the vertices (and their neighbors) it stores. Therefore, after $O(1)$ rounds, we find a hash function such that \textit{all high-degree vertices} in $U$ have the desired number of sampled neighbors.
\end{proof}

Next, we discuss how to extend \cref{lem:bi-partite-small-degree} to handle the case in which not all neighbors of a vertex in $U$ can be collected onto a single machine. In particular, if $\Delta \gg n^{\alpha}$, then aiming for a reduction of a $\sqrt{\Delta}$-factor might not be viable, given the constrained local memory. Due to that, we slightly relax our goal and reduce our high-degree neighborhoods by a $n^{\eps}$-factor, for some constant $\eps < \alpha$. To achieve that, we split edges into groups so that each machine is assigned $n^{c\cdot\eps}$ edges, for $c > 1$. While we can only control the deviation of each single group of edges, we will be able to bound the overall number of neighbors, i.e., edges per node, using the fact that there are at most $\Delta / n^{c\cdot\eps}$ groups.

\begin{lemma}\label{lem:one-sampling-high-degr}
Let $G$ be a graph with bipartition $V(G) = U \sqcup V$. Let $\Delta$ be an upper bound on the maximum degree of any node in $U$ such that $\Delta \geq n^{10\eps}$, for some constant $\eps > 0$. Then, there exists a deterministic constant-round sublinear MPC algorithm that computes a subset $V^{sub} \subseteq V$ such that for any node $u \in U$ with $deg_G(u) \geq \log(n) \cdot \Delta^{0.6}$, it holds that $|N_G(u) \cap V^{sub}| \in \left[ \frac{1}{2 n^{\eps}}|N_G(u)|,\frac{3}{2n^{\eps}}|N_G(u)|\right]$. The global space usage is linear in the input size.
\end{lemma}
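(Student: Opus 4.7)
The plan is to reduce to the single-machine setting of \cref{lem:bi-partite-small-degree} by partitioning each high-degree neighborhood into machine-size blocks, then running essentially the same $k$-wise independent sampling scheme but with target sampling probability $1/n^{\eps}$ rather than $1/\sqrt{\Delta}$.

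First, I would fix a constant $c>1$ with $c\eps<\alpha$, so that blocks of size $s = n^{c\eps}$ fit on a single machine. Using standard sublinear \mpc\ primitives I partition, for each high-degree $u$ (i.e.\ with $deg_G(u)\ge \log n\cdot \Delta^{0.6}$), the neighborhood $N_G(u)$ into blocks $B^u_1, B^u_2, \ldots$ of size $s$ with at most one residual block of smaller size, and store each block together with its endpoint $u$ on a dedicated machine. Exactly as in \cref{lem:bi-partite-small-degree}, I assume a coloring of $V$ with $\poly(\Delta)$ colors that is proper for pairs sharing a common $U$-neighbor (this is the same coloring assumption the section later establishes), and each $v\in V$ is sampled iff a $k$-wise independent hash $h$ evaluated at $c_v$ equals $1\in[n^{\eps}]$. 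Since $\Delta\ge n^{10\eps}$ forces $\log_\Delta n = O(1)$, taking $k$ a sufficiently large constant keeps the seed length at $O(\log n)$ and hence $|\mathcal{H}|=\poly(n)$.

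Next, I would apply \cref{lem:kwise_bound} to each block separately. The expected number of sampled nodes in a full block is $n^{(c-1)\eps}$, which is polynomially large; picking $k$ a large enough constant makes the probability that a single block deviates from its expected sampled-count by more than a $1/3$-factor at most $n^{-C}$ for any constant $C$ of our choice. A union bound over the at most $n\Delta/s$ (vertex, block) pairs then keeps the expected number of ``bad'' pairs strictly below $1$. The corresponding pessimistic estimator --- the total count of bad pairs --- decomposes into contributions that each machine can compute locally from the block it stores, so the distributed method of conditional expectation (as in \cref{lem:bi-partite-small-degree,lem:bad_nodes_mis_grouped}) finds in $O(1)$ rounds a seed under which no block is bad. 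Summing the per-block guarantees across all blocks of any high-degree $u$ places $|N_G(u)\cap V^{sub}|$ inside $[\frac{2}{3}|N_G(u)|/n^{\eps},\frac{4}{3}|N_G(u)|/n^{\eps}]$, well within the target window, and the global space remains linear because we never create new edges.

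The main obstacle I anticipate is the lone residual block, which may be too small for the concentration bound of \cref{lem:kwise_bound} to fire. A clean fix is to \emph{merge} the residual with an adjacent full block, so that every block has size in $[s,2s]$; the expected sampled-count remains $\Theta(n^{(c-1)\eps})$ and the concentration step is unchanged. Alternatively, since $|N_G(u)|\ge\log(n)\cdot\Delta^{0.6}\ge\log(n)\cdot n^{6\eps}\gg s$ for any modest $c$, one residual block contributes at most $s$ extra neighbors, i.e.\ an additive $s/n^{\eps}$ error that is absorbed by the gap between the $[\frac{2}{3},\frac{4}{3}]$ per-block guarantee and the $[\frac{1}{2},\frac{3}{2}]$ window of the statement. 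The rest of the accounting (linear global space and $O(1)$ rounds for the \mpc\ primitives and the method of conditional expectation) is identical to that of \cref{lem:bi-partite-small-degree}.
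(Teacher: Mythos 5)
Your proposal is correct and matches the paper's approach: both split each high-degree neighborhood into machine-sized blocks of $\Theta(n^{\Theta(\eps)})$ edges, sample $V$ at rate $n^{-\eps}$ under $O(1)$-wise independence (the coloring you invoke is harmless but actually unnecessary here since $\Delta\ge n^{10\eps}$ already makes $\log_\Delta n=O(1)$), apply \cref{lem:kwise_bound} per block, handle the one partial block as a lower-order additive term absorbed by the gap to the $[\frac12,\frac32]$ window, and derandomize via the distributed method of conditional expectation. The only cosmetic difference is that the paper aims for a tighter per-block deviation ($n^{3\eps}\pm n^{2\eps}$) while you settle for a $1/3$-factor; both comfortably land in the stated interval.
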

\begin{proof}
Consider an arbitrary vertex $u \in U$ with degree at least $\log(n) \cdot \Delta^{0.6}$. The idea is to split edges of $u$ into groups of size at most $n^{4\eps}$, which fits into the memory of one machine. Specifically, each machine holds $n^{4\eps}$ edges except for a single machine that holds any remaining edges, which are at most $n^{4\eps}$. Then, we sample nodes in $V$ with probability $n^{-\eps}$ according to a family of $O(1)$-wise independent hash function. Using a calculation similar to that of \cref{lem:bi-partite-small-degree}, we can find a hash function such that all groups of $n^{4\eps}$ edges have $n^{3\eps} \pm n^{2\eps}$ sampled edges. Then, the total number of sampled neighbors is at least
\begin{align*}
    \sum_{\text{machine } i}n^{3\eps} - n^{2\eps} \ge \floor{\frac{|N_G(u)|}{n^{4\eps}}} \cdot \left(n^{3\eps} - n^{2\eps}\right) \ge  \frac{|N_G(u)|}{n^{\eps}} - \frac{|N_G(u)|}{n^{2\eps}} - n^{3\eps} \ge \frac{|N_G(u)|}{2n^{\eps}},
\end{align*}
where $n^{3\eps} = o(\frac{|N_G(u)|}{2n^{\eps}})$ since $N_G(u) \ge n^{6\eps}$. An analogous calculation shows that the total number of sampled neighbors for any vertex $u$ is at most $\frac{3|N_G(u)|}{2n^{\eps}}$.
\end{proof}

We are now ready to present our $O(\log \log \Delta)$ sparsification. We show that we can find a subset of nodes incident to all nodes in $U$ such that their induced maximum degree is $2^{O(\log f)}$ for $f = 2^{\sqrt{\log \Delta}}$. This is achieved by repeating the sampling processes of Lemmas \ref{lem:bi-partite-small-degree} and \ref{lem:one-sampling-high-degr} for  $O(\log \log \Delta)$ times. Here, one key observation to bound the deviation is that in each run of \cref{lem:bi-partite-small-degree} only the lower tail may deviate up to a $1/3$-factor from $\frac{|N_G(u)|}{\sqrt{\Delta}}$. So, the final multiplicative error will be $3^{O(\log \log \Delta)} = \poly \log \Delta$.

\begin{lemma}\label{lem:sparsification}
     Let $G$ be a graph with bipartition $V(G) = U \sqcup V$. Let $\Delta$ and $\frac{\Delta}{f}$ be an upper bound on the maximum degree and a lower bound on the minimum degree, respectively, of any node in $U$ for any parameter $f \le \frac{\Delta^{0.4}}{\log n}$ and $f \ge \poly(\log n)$.
     There exists a sublinear MPC algorithm that computes in $O(\log \log \Delta)$ rounds a subset $V^{sub} \subseteq V$ such that for any node $u \in U$ with $\deg_G(u) \geq \frac{\Delta}{f}$, it holds that $|N_G(u) \cap V^{sub}| \in [1, 2^{O(\log f)}]$. The algorithm global space usage is linear in the input size.
\end{lemma}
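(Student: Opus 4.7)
The plan is to iteratively apply the degree-reduction primitives of \Cref{lem:bi-partite-small-degree} and \Cref{lem:one-sampling-high-degr} so that the maximum degree in the induced bipartite graph decreases roughly by a square-root factor per iteration, while the minimum degree of every originally high-degree vertex $u \in U$ (one with $\deg_G(u) \ge \Delta/f$) shrinks by the same rate up to a controlled multiplicative error. We maintain a nested sequence $V = V^{sub}_0 \supseteq V^{sub}_1 \supseteq \cdots \supseteq V^{sub}_k$, and write $\Delta_i$ for the maximum degree of $U$ in $G[U \sqcup V^{sub}_i]$.

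If $\Delta > n^{10\eps}$ for a suitable constant $\eps$ depending on the local-memory parameter $\alpha$, we first run a short \emph{compression} phase: apply \Cref{lem:one-sampling-high-degr} at most $O(1/\eps) = O(1)$ times, each application reducing the current maximum degree by an $n^{\eps}$-factor. After this phase, $\Delta_{\mathrm{init}} \le O(n^{\alpha})$, so all relevant neighborhoods fit on a single machine and \Cref{lem:bi-partite-small-degree} becomes applicable; the constant-factor deviation introduced here is absorbed into a global constant.

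In the main phase, at each iteration $i \ge 1$ we first compute, in $O(1)$ sublinear MPC rounds, a coloring of $V^{sub}_{i-1}$ using $O(\Delta_{i-1}^{6})$ colors such that two nodes sharing a neighbor in $U$ receive distinct colors (the construction is discussed at the end of this section), and then apply \Cref{lem:bi-partite-small-degree} to obtain $V^{sub}_i$. Telescoping the per-iteration bounds yields $\Delta_i \le \Delta_{\mathrm{init}}^{1/2^i}$ and, for every $u \in U$ with original degree at least $\Delta/f$,
\begin{equation*}
|N_G(u) \cap V^{sub}_i| \;\ge\; \frac{|N_G(u)|}{3^i\,\prod_{j<i}\sqrt{\Delta_j}} \;\ge\; \frac{\Delta/f}{3^{\,i}\, \Delta_{\mathrm{init}}^{\,1-1/2^i}} \;=\; \frac{\Delta_{\mathrm{init}}^{\,1/2^i}}{3^{\,i}\cdot O(f)}.
\end{equation*}

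We stop at the first $k$ with $\Delta_{\mathrm{init}}^{1/2^k} \le f^{c}$ for a suitable constant $c>1$; this gives $k = O(\log\log\Delta)$ total iterations. The upper bound $|N_G(u) \cap V^{sub}_k| \le f^{c} = 2^{O(\log f)}$ is then immediate, and the lower bound simplifies to $\Omega(f^{c-1}/3^{\,k}) = \Omega(f^{c-1}/\poly\log\Delta) \ge 1$, using the hypothesis $f \ge \poly(\log n)$. Along the way, the precondition of \Cref{lem:bi-partite-small-degree} that $\deg(u) \ge \log n \cdot \Delta_{i-1}^{0.6}$ for the targeted vertices follows from the same chain of inequalities combined with the upper bound $f \le \Delta^{0.4}/\log n$. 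The main obstacle is the multiplicative loss of a factor $3$ per iteration that compounds to $3^{O(\log\log\Delta)} = \poly\log\Delta$; the entire analysis is arranged so that the two bounds on $f$ (polylog-large at the bottom, $\Delta^{0.4}/\log n$-small at the top) leave just enough slack for this accumulated error to be absorbed at both ends of the final interval $[1,2^{O(\log f)}]$.
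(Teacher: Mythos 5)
Your proof follows essentially the same route as the paper's: run the coarse degree reduction of \Cref{lem:one-sampling-high-degr} a constant number of times to bring the maximum degree down to $O(n^\alpha)$, then iterate \Cref{lem:bi-partite-small-degree} for $O(\log\log\Delta)$ steps, tracking the degree interval and observing that the multiplicative loss compounds only to $3^{O(\log\log\Delta)}=\poly\log\Delta$, which the slack in $f$ absorbs. One small slip: after stopping at the first $k$ with $\Delta_{\mathrm{init}}^{1/2^k}\le f^c$ you may only use the complementary bound $\Delta_{\mathrm{init}}^{1/2^k}>f^{c/2}$ in the lower estimate, so the final count is $\Omega(f^{c/2-1}/\poly\log\Delta)$ rather than $\Omega(f^{c-1}/\poly\log\Delta)$ and you need $c>2$, not $c>1$ — but this is cosmetic and does not affect the argument.
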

\begin{proof}
    Our goal is to find a suitable set $V^{sub}$ by applying the sparsification outlined in \cref{lem:bi-partite-small-degree}. If $\Delta \ge n^\alpha$, we first apply \cref{lem:one-sampling-high-degr} for $O(1/\eps) = O(1)$ times until the maximum degree in $U$ is within the memory capacity of a single machine $O(n^\alpha)$, which can be achieved by setting $\eps \le \frac{\alpha}{10}$, i.e., $n^\alpha \ge n^{10 \eps}$. Define $\Delta' \le n^\alpha$ as the maximum degree in $U$ after downsampling vertices in $V$  for $O(1)$ iterations as per \cref{lem:one-sampling-high-degr}. Notice that the minimum degree in $U$ is now $c \cdot \frac{\Delta'}{f}$, for some constant $c > 0$. Then, we run the algorithm of \cref{lem:bi-partite-small-degree} for $k = O(\log \log \Delta)$ iterations, and stop as soon as the minimum degree in $U$ is within $2^{O(\log f)}$. We prove by induction that after k iterations nodes have degrees in 
    \begin{equation*}
        \left[\frac{c}{f \cdot 3^k} (\Delta')^{1/2^k}, \, (\Delta')^{1/2^k}\right].
    \end{equation*}
    The base case follows from \cref{lem:bi-partite-small-degree}. The induction step then follows from $$\left[\frac{c}{f \cdot 3^{(k-1)}} (\Delta')^{1/2^{(k-1)}} \cdot \frac{1}{3 (\Delta')^{1/2^{k}}},\, (\Delta')^{1/2^{(k-1)}} \cdot \frac{1}{(\Delta')^{1/2^{k}}} \right] = \left[\frac{c}{f \cdot 3^k} (\Delta')^{1/2^k}, \, (\Delta')^{1/2^k}\right].$$
    By choosing $k = \floor{\log \log \Delta' - \log (2 \log (f \cdot \log \Delta'))}$, one can verify that, for any vertex in $U$, the minimum degree in the downsampled graph will be at least one, and the maximum degree at most $2^{O(\log (f \cdot \log \Delta))} = 2^{O(\log f)}$.
\end{proof}

Our $2$-ruling set algorithm is paramterized by $f = 2^{\sqrt{\log \Delta}}$. On a high-level, we mimic the randomized local $2$-ruling set algorithm of \cite{KP12}. In each iteration $i$, $0 \le i \le \lfloor \log f \rfloor$, we address the set of vertices with degree in $(\Delta/f^{i+1}, \Delta/f^i]$. We apply the sparsification of \cref{lem:sparsification} on each set of high-degree vertices, one set at a time sequentially. Each sparsified subgraph is then put aside and, together with all incident nodes in $G$, is removed from further consideration before starting the next iteration. At the end, the union of all subgraphs of induced maximum degree $2^{O(\log f)}$ and possibly some remaining low-degree vertices are given in input to an MIS algorithm, whose solution is effectively a $2$-ruling set. We detail the algorithm in the following pseudocode and proceed to its analysis below.

\begin{algorithm}
    \caption{\textsc{Sublinear 2-Ruling Set}}
    \begin{algorithmic} 
        \State $f \leftarrow 2^{\sqrt{\log \Delta}}$; $M \leftarrow \emptyset$
		\For{$i \leftarrow 0, 1, \cdots, \floor{\log f}$}
            \State $U \leftarrow \{v \in V \mid \deg_G(v) \in (\frac{\Delta}{f^{i+1}},\, \frac{\Delta}{f^{i}}]\}$; $V' \leftarrow V$
            \State $G' \leftarrow (U \sqcup V', E' = \{(u, v) \mid u \in U, v \in V', (u, v) \in E\})$ \Comment{Bipartition for sparsification}
                \For{$j \leftarrow 1, 2, \cdots, O(\log \log \Delta)$} \Comment{See also  \cref{lem:sparsification}}
                    \State $\Delta' \leftarrow $ maximum degree in $G'$
		          \State $V' \leftarrow $ sample $v \in V'$ with prob.\ $\max\{\frac{2}{3 \sqrt{\Delta'}}, \frac{1}{n^{\eps}}\}$
                \EndFor
            \State $M \leftarrow M \cup V'$ 
            \State $V \leftarrow V \setminus (V' \cup N_G(V'))$ \Comment{Remove neighbors of sampled set}
         \EndFor
         \State Return MIS on $G[M \cup V]$
    \end{algorithmic}
\end{algorithm}

\begin{lemma}\label{lem:deg-red}
    At the end of iteration $i$, $1 \le i \le \floor{\log f}$, all vertices still in $V$ have degree at most $\max\{\frac{\Delta}{f^i}, 2^{O(\log f)}\}$.
\end{lemma}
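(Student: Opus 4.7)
I will proceed by induction on $i$. The key invariant is that in every iteration $i'$ in which the precondition of \cref{lem:sparsification} is satisfied, the update $V \leftarrow V \setminus (V' \cup N_G(V'))$ removes from $V$ every vertex $v$ with $\deg_G(v) \in (\Delta/f^{i'+1}, \Delta/f^{i'}]$: such a $v$ lies in the set $U$ constructed in iteration $i'$, so \cref{lem:sparsification} (applied with maximum degree $\Delta/f^{i'}$ and ratio $f$) guarantees $|N_G(v) \cap V^{sub}| \ge 1$, placing $v$ in $N_G(V')$ and hence deleting it.

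For the base case $i = 1$, the set $U$ constructed in iteration $0$ has maximum degree $\le \Delta$ and minimum degree $> \Delta/f$; since $f = 2^{\sqrt{\log \Delta}}$ satisfies the required $f \le \Delta^{0.4}/\log n$ in the relevant range, \cref{lem:sparsification} applies and the invariant removes every vertex with $\deg_G(v) > \Delta/f$. The same argument in iteration $1$ then removes every vertex with $\deg_G(v) \in (\Delta/f^2, \Delta/f]$. Hence every remaining vertex satisfies $\deg_G(v) \le \Delta/f^2 \le \Delta/f^1$, and its current degree in $G[V]$ is bounded above by its original degree.

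The inductive step from $i-1$ to $i$ splits into two cases. If $\Delta/f^i$ is large enough for \cref{lem:sparsification} to apply in iteration $i$ -- which, for $f = 2^{\sqrt{\log \Delta}}$, reduces to $\Delta/f^i \ge 2^{c\log f}$ for some absolute constant $c$ after absorbing the $\poly\log n$ slack of the sparsification's hypothesis into the exponent of $\log f$ -- then the invariant removes the entire band $(\Delta/f^{i+1}, \Delta/f^i]$, leaving the original (and hence the current) degree bounded by $\Delta/f^{i+1} \le \Delta/f^i$. Otherwise, $\Delta/f^i < 2^{O(\log f)}$: the inductive hypothesis already bounds the current degree of every remaining vertex by $2^{O(\log f)}$, and this bound is preserved because deletions can only decrease degrees.

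The main obstacle is the bookkeeping for the critical threshold below which \cref{lem:sparsification} ceases to apply. Specifically, one must check that the $\poly\log n$ factor appearing in the precondition $f \le (\Delta/f^i)^{0.4}/\log n$ of \cref{lem:sparsification} can be absorbed into a constant-factor blow-up of $\log f$, exploiting the relation $\log f = \sqrt{\log \Delta}$; this absorption is what accounts for the second argument of the $\max$ in the lemma statement and allows the two cases of the inductive step to be stated as a clean dichotomy.
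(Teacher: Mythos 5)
Your proof uses the same key mechanism as the paper's: \cref{lem:sparsification} guarantees every $u \in U$ acquires a neighbor in the sampled set $V'$ (which joins $M$), so the removal step deletes the entire band of vertices with $\deg_G(v) \in (\Delta/f^{i+1}, \Delta/f^i]$. The paper records this in two sentences and leaves the rest implicit; you make the induction explicit and add the case split on whether \cref{lem:sparsification}'s precondition $f \le (\Delta')^{0.4}/\log n$ still holds, which is precisely what produces the $2^{O(\log f)}$ term in the $\max$ --- a useful detail the paper's one-line proof glosses over. One bookkeeping wrinkle: in case~1 of your inductive step, concluding that remaining degrees are $\le \Delta/f^{i+1}$ requires knowing that, entering iteration $i$, degrees were already $\le \Delta/f^{i}$ --- one factor of $f$ sharper than the inductive hypothesis $\max\{\Delta/f^{i-1}, 2^{O(\log f)}\}$ that you state. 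This tighter bound is in fact delivered by your own key invariant applied to iteration $i-1$ (whose precondition also holds in case~1, since $\Delta/f^{i-1} > \Delta/f^{i}$), so the reasoning is sound in substance, but the induction would close cleanly only if you carry the sharper invariant $\deg \le \max\{\Delta/f^{i+1}, 2^{O(\log f)}\}$ as the statement being inducted on.
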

\begin{proof}
    Consider a high-degree vertex $u \in U$ at the start of the $i$-th iteration. By \cref{lem:sparsification}, each node in $U$ is incident to a node that joins the set $M$ by the end of this iteration. Since all vertices incident to $M$ are removed from $V$, the lemma follows.
\end{proof}

\begin{lemma}\label{lem:mis_graph}
    After $\floor{\log f}$ iterations, the subgraph induced by $M$ together with vertices still in $V$, i..e, $G[M \cup V]$, has maximum degree $2^{O(\log f)}$.
\end{lemma}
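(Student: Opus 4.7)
The plan is to bound the degree in $G[M \cup V]$ of each vertex separately, depending on whether it is still in $V$ at the end of the algorithm or has been moved to $M$ in some iteration.

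For a vertex $v$ still in $V$ at the end, \cref{lem:deg-red} applied with $i = \floor{\log f}$ gives $\deg_G(v) \le \max\{\Delta/f^{\floor{\log f}}, 2^{O(\log f)}\}$. Using $f = 2^{\sqrt{\log \Delta}}$, we have $f^{\log f} = \Delta$, so $\Delta/f^{\floor{\log f}} \le f = 2^{O(\log f)}$, and therefore $v$'s degree in $G[M \cup V]$ is at most $2^{O(\log f)}$.

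For $v \in M$, let $i^*$ be the unique iteration in which $v$ was sampled into $V^{sub}_{i^*}$ and added to $M$. The key structural observation is that $v$'s neighbors in $G[M \cup V]$ at the end coincide with $N_G(v) \cap V^{sub}_{i^*}$. Indeed, (i) the algorithm removes $N_G(V^{sub}_{i^*}) \supseteq N_G(v)$ from $V$ in iteration $i^*$, so no $G$-neighbor of $v$ remains in $V$ thereafter; (ii) any $G$-neighbor of $v$ that entered $M$ in an earlier iteration $i' < i^*$ would force $v \in N_G(V^{sub}_{i'})$, and hence $v$'s own removal from $V$ in iteration $i'$, contradicting $v$'s presence in $V$ at the start of iteration $i^*$; (iii) no $G$-neighbor of $v$ can enter $M$ in a later iteration, since by (i) all of $v$'s $G$-neighbors are already absent from $V$ after iteration $i^*$.

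It therefore suffices to show $|N_G(v) \cap V^{sub}_{i^*}| \le 2^{O(\log f)}$. If $v \in U_{i^*}$, this is immediate from \cref{lem:sparsification} applied to the bipartition $U_{i^*} \sqcup V$ of iteration $i^*$, with parameters $\Delta/f^{i^*}$ and $f$. Otherwise $v \notin U_{i^*}$, and since any higher-degree vertex surviving into iteration $i^*$ would have belonged to some earlier $U_{i'}$ and been covered by \cref{lem:sparsification} there, one obtains $\deg_G(v) \le \Delta/f^{i^*+1}$. The main obstacle lies in this second sub-case: the sampling rate used by \cref{lem:sparsification} yields in expectation at most $2^{O(\log f)}/f$ sampled neighbors for such $v$, but turning this into a deterministic $O(2^{O(\log f)})$ bound requires augmenting the method of conditional expectation that derandomizes \cref{lem:sparsification} so that it simultaneously tracks sampled-degree constraints for every $v \in V^{sub}_{i^*}$, not only the $U$-side vertices.
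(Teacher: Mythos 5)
Your structure mirrors the paper's proof: you split vertices of $G[M\cup V]$ into those still in $V$, handled via \cref{lem:deg-red}, and those added to $M$, for which you show that the surviving neighborhood of a vertex $v$ added at iteration $i^*$ equals $N_G(v)\cap V^{sub}_{i^*}$. Your observations (i)--(iii) establishing this identity are correct and spell out carefully what the paper states only briefly (``no earlier neighbor joined $M$, and later candidates are removed''). The treatment of vertices still in $V$, using $f^{\log f}=\Delta$ to conclude $\Delta/f^{\lfloor\log f\rfloor}\le 2^{O(\log f)}$, also matches.

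The concern you raise for $v\in V^{sub}_{i^*}\setminus U_{i^*}$ is legitimate, and, notably, the paper's own proof elides it: \cref{lem:sparsification} only bounds $|N_G(u)\cap V^{sub}|$ for $u\in U$, yet the algorithm adds all of $V'$, not just $V'\cap U$, to $M$. The paper's assertion that ``by \cref{lem:sparsification}, all vertices that join $M$ at iteration $j$ have induced degree at most $2^{O(\log f)}$'' is therefore not supported by the lemma as stated for sampled vertices outside $U$ (for early iterations $i^*$, such a $v$ can have $\deg_G(v)$ up to $\Delta/f^{i^*+1}\gg 2^{O(\log f)}$). You have correctly located a real imprecision. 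However, your writeup also stops at diagnosis. Closing the gap seems to require either (a) strengthening the derandomization in \cref{lem:bi-partite-small-degree} so that the conditional-expectation objective also penalizes any sampled $v\in V'$ whose sampled $G$-neighborhood exceeds $2^{O(\log f)}$ (the expectation is roughly $2^{O(\log f)}/f$, so there is slack, but concentration in this low-expectation regime under $O(\log_\Delta n)$-wise independence must actually be verified), or (b) modifying the algorithm to add only $V^{sub}_{i^*}\cap U_{i^*}$ to $M$, which then needs a new coverage argument since the sampled neighbor of $u\in U_{i^*}$ guaranteed by \cref{lem:sparsification} need not itself lie in $U_{i^*}$. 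As written, neither your proof nor the paper's fully handles this sub-case.
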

\begin{proof}
    First, consider a vertex $v$ that joins the set $M$ at some iteration $j$. Observe that no neighbor of $v$ in $G$ had joined $M$ earlier, otherwise, $u$ would have been removed. By \cref{lem:sparsification}, all vertices that join $M$ at iteration $j$ have induced degree at most $2^{O(\log f)}$. Then, the neighbors of $M$ are removed from $V$ and, thus, cannot join $M$ anymore. This proves that vertices in $M$ have degree at most $2^{O(\log f)}$.
    Second, consider a vertex $w$ that at the end of the $\floor{\log f}$-th iteration is still in $V$. This means that $w$ does not neighbor $M$ and that, by \cref{lem:deg-red}, $w$ has degree at most $2^{O(\log f)}$, finishing the claim.
\end{proof}

\begin{proof}[Proof of \cref{thm:sublinear-memory}]
    As proved in \cref{lem:sparsification}, each iteration of the algorithm runs in $O(\log \log \Delta)$ rounds. Since there are $O(\sqrt{\log \Delta})$ iterations for $f = 2^{\sqrt{\log \Delta}}$, the total number of rounds is $O(\sqrt{\log \Delta} \cdot \log \log \Delta)$. From \cref{lem:mis_graph}, we see that the sparsified graph given by $M$ together with vertices still in $V$ has degree at most $2^{O(\sqrt{\log \Delta})}$. Therefore, the MIS computation at the end of the algorithm takes $O(\sqrt{\log \Delta} + \log \log^* n)$ by using the deterministic MIS algorithm from Lemma 27 of \cite{CDP24} that runs in $O(\log \Delta' + \log \log^* n)$ on a $\Delta'$-maximum degree graph, provided that the allowed global space is $O(n^{1+\delta}+m)$. Otherwise, we use the variation given in \cite{FGG23} that runs in $O(\sqrt{\log \Delta} \cdot \log \log n)$ and uses linear global space.
\end{proof}

Lastly, we need to show how to achieve a $\poly(\Delta)$ coloring of $G^2$ to fulfill the assumption made in \cref{lem:bi-partite-small-degree}.

Whenever $\Delta = n^{\Omega(1)}$, the initial assignment of IDs to vertices, typically from $1$ to $n$, effectively serves as a $\poly(\Delta)$ coloring of $G^2$. In the case where $\Delta \le n^{\delta}$ for constant $\delta < \alpha/2$, we ensure $\Delta^2 \ll n^{\alpha}$. This implies that the $2$-hop neighborhood of every node can be stored within the local memory of a single machine. Storing the $2$-hop neighbors on a single machine permits the use of Linial's coloring reduction technique \cite{Lin92}, which achieves a $O(\Delta^6)$ coloring in $O(1)$ rounds. However, this approach necessitates of a global space usage of $O(n^{1+2\delta})$, potentially exceeding $O(n+m)$.
To improve the global space usage, after three runs of \cref{lem:bi-partite-small-degree}, the degree of each vertex which has not been removed is at most $\Delta^{0.22}$. Since each sampled vertex is incident to a high-degree vertex of initial degree at least $O(\Delta/f)$, we can charge high-degree vertices $O(\Delta^{0.66}) \ll \Delta/f$ space consumption. This reduction allows us to gather the $2$-hop neighbors of all active nodes onto single machines without breaching the global space limit.
A further optimization involves substituting the first three runs of \cref{lem:bi-partite-small-degree} with a weaker version, detailed below, addressing all but at most $\frac{n}{\Delta^{0.01}}$ vertices. The proof follows from that of \cref{lem:bi-partite-small-degree}.

\begin{lemma}\label{lem:one-step-derand-weak}
    Let $G = (V, E)$ be a graph with an upper bound $\Delta$ on the maximum degree. There is a sublinear MPC algorithm that computes in $O(1)$ rounds a subset $V' \subseteq V$ ensuring that, for all but at most $\frac{n}{\Delta^{0.01}}$ vertices $v \in V$ with $\deg_G(v) \geq \log(n) \cdot \Delta^{0.6}$, it holds that $|N_G(v) \cap V'| \in \left[\frac{1}{3\sqrt{\Delta}}|N_G(v)|, \frac{1}{\sqrt{\Delta}}|N_G(v)|\right]$.
\end{lemma}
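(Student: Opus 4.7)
The plan is to adapt the proof of \cref{lem:bi-partite-small-degree}, dropping the $O(\Delta^6)$-coloring requirement. The coloring there was needed so that the failure probability per high-degree vertex could be driven below $1/n$, allowing every such vertex to succeed simultaneously. In \cref{lem:one-step-derand-weak} we only need the expected number of failing vertices to be at most $n/\Delta^{0.01}$, so a per-vertex failure probability of $1/\Delta^{0.01}$ suffices, and this can be achieved directly with $O(1)$-wise independence over vertex IDs without any coloring.

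First, I would sample each vertex $v \in V$ using a hash function $h$ drawn from a $k$-wise independent family $\mathcal{H} = \{h : [n] \to [\lceil 3\sqrt{\Delta}/2\rceil]\}$ with $k = O(1)$; $v$ joins $V'$ iff $h(v) = 1$. By \cref{lem:k-wise}, a seed of length $k(\log\lceil 3\sqrt{\Delta}/2\rceil + \log n) + O(1) = O(\log n)$ suffices, so $|\mathcal{H}| = \poly(n)$. For any fixed vertex $u$ with $\deg_G(u) \geq \log(n)\cdot\Delta^{0.6}$, write $X = \sum_{v \in N_G(u)} X_v$ where $X_v = \mathbbm{1}[v \in V']$, and let $\mu = \E[X] = (2/3)\,|N_G(u)|/\sqrt{\Delta} \geq (2/3)\log(n)\cdot\Delta^{0.1}$. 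Applying \cref{lem:kwise_bound},
\begin{equation*}
\Pr\!\left[\, |X - \mu| \geq \mu/2 \,\right] \;\leq\; 8\left(\frac{8k}{\mu}\right)^{k/2},
\end{equation*}
and taking $k$ to be a sufficiently large constant makes this quantity at most $1/\Delta^{0.01}$ (using $\mu = \Omega(\log(n)\Delta^{0.1})$).

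By linearity of expectation, the expected number of high-degree vertices violating $|N_G(u)\cap V'| \in [|N_G(u)|/(3\sqrt{\Delta}),\, |N_G(u)|/\sqrt{\Delta}]$ under a random $h \in \mathcal{H}$ is at most $n/\Delta^{0.01}$. To derandomize, I invoke the distributed method of conditional expectation with objective function equal to the number of bad high-degree vertices. This objective decomposes as a sum of indicator terms, one per vertex, each locally computable by the machine storing that vertex's neighborhood (assumed, as in \cref{lem:bi-partite-small-degree}, to fit locally). Since $|\mathcal{H}| = \poly(n)$, the framework described in the preliminaries finds a hash function achieving the expectation in $O(1)$ MPC rounds, producing the required $V'$.

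The only mild subtlety is confirming that a constant $k$ is enough to push $(8k/\mu)^{k/2}$ below $1/\Delta^{0.01}$; this is immediate since $\mu$ is already polynomial in $\Delta$, so any $k \geq 1/(0.1\cdot 0.5 \cdot 0.01^{-1})$ works. Everything else—seed length, machine-local evaluability, and the conditional expectation round count—carries over verbatim from the proof of \cref{lem:bi-partite-small-degree}, with the coloring step simply omitted.
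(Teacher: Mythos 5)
Your proof is correct and matches the paper's intent: the paper simply states that the lemma ``follows from'' the proof of \cref{lem:bi-partite-small-degree}, and you have correctly identified the two changes that make this work — dropping the $O(\Delta^6)$-coloring (which was only needed to shorten the seed when $k = \Theta(\log_\Delta n)$ was required for per-vertex failure probability $1/\poly(n)$) and instead using $k = O(1)$, which already gives per-vertex failure probability $1/\Delta^{\Omega(1)}$ and hence at most $n/\Delta^{0.01}$ expected bad vertices. The derandomization via the distributed method of conditional expectation with the count of bad vertices as objective carries over verbatim, so your argument is complete.
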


Applying \cref{lem:one-step-derand-weak} initially and excluding up to $\frac{n}{\Delta^{\Omega(1)}}$ vertices not meeting our criteria allows for the execution of $O(\log \log \Delta)$ iterations for the well-behaved vertices. The excluded vertices are subsequently addressed by repeating the same process. After $O(1)$ iterations, the remaining vertex count drops to $O(\frac{n}{\Delta^2})$, fitting the global space needed to store their $2$-hop neighborhoods within $O(n)$. Consequently, after $O(\log \log \Delta)$ rounds, all vertices are processed without affecting the asymptotic total number of rounds.

\section*{Acknowledgements}
We thank Christoph Grunau and Manuela Fischer for helpful comments and discussions.

Jeff Giliberti gratefully acknowledges financial support by the Fulbright U.S. Graduate Student Program, sponsored by the U.S. Department of State and the Italian-American Fulbright Commission. The content does not necessarily represent the views of the Program.
\bibliographystyle{alpha}
\bibliography{ref}

\newcommand{\etalchar}[1]{$^{#1}$}
\begin{thebibliography}{GGK{\etalchar{+}}18}

\bibitem[ABI86]{ABI86}
Noga Alon, László Babai, and Alon Itai.
\newblock A fast and simple randomized parallel algorithm for the maximal independent set problem.
\newblock {\em Journal of Algorithms}, 7(4):567--583, 1986.

\bibitem[ANOY13]{ANOY13}
Alexandr Andoni, Aleksandar Nikolov, Krzysztof Onak, and Grigory Yaroslavtsev.
\newblock Parallel algorithms for geometric graph problems.
\newblock {\em Proceedings of the forty-sixth annual ACM symposium on Theory of computing}, 2013.

\bibitem[BBH{\etalchar{+}}19]{BBH+19}
Alkida Balliu, Sebastian Brandt, Juho Hirvonen, Dennis Olivetti, Mika{\"e}l Rabie, and Jukka Suomela.
\newblock Lower bounds for maximal matchings and maximal independent sets.
\newblock {\em 2019 IEEE 60th Annual Symposium on Foundations of Computer Science (FOCS)}, pages 481--497, 2019.

\bibitem[BBKO22]{BBKO22}
Alkida Balliu, Sebastian Brandt, Fabian Kuhn, and Dennis Olivetti.
\newblock Distributed delta-coloring plays hide-and-seek.
\newblock In {\em Proceedings of the 54th Annual ACM SIGACT Symposium on Theory of Computing}, STOC 2022, page 464–477, New York, NY, USA, 2022. Association for Computing Machinery.

\bibitem[BBO22]{BBO22}
Alkida Balliu, Sebastian Brandt, and Dennis Olivetti.
\newblock Distributed lower bounds for ruling sets.
\newblock {\em SIAM Journal on Computing}, pages 70--115, 2022.

\bibitem[BEPS16]{BEPS16}
Leonid Barenboim, Michael Elkin, Seth Pettie, and Johannes Schneider.
\newblock The locality of distributed symmetry breaking.
\newblock {\em J. ACM}, 63(3), 6 2016.

\bibitem[BHP12]{BHP12}
Andrew Berns, James Hegeman, and Sriram~V. Pemmaraju.
\newblock Super-fast distributed algorithms for metric facility location.
\newblock {\em ArXiv}, abs/1308.2473, 2012.

\bibitem[BKP14]{BKP14}
Tushar Bisht, Kishore Kothapalli, and Sriram~V. Pemmaraju.
\newblock Brief announcement: Super-fast t-ruling sets.
\newblock {\em Proceedings of the 2014 ACM symposium on Principles of distributed computing}, 2014.

\bibitem[BKS13]{BKS13}
Paul Beame, Paraschos Koutris, and Dan Suciu.
\newblock Communication steps for parallel query processing.
\newblock {\em Proceedings of the 32nd ACM SIGMOD-SIGACT-SIGAI symposium on Principles of database systems}, 2013.

\bibitem[BR94]{BR94}
M.~Bellare and J.~Rompel.
\newblock Randomness-efficient oblivious sampling.
\newblock In {\em Proceedings 35th Annual Symposium on Foundations of Computer Science}, pages 276--287, 1994.

\bibitem[CC22]{CC22}
Sam Coy and Artur Czumaj.
\newblock Deterministic massively parallel connectivity.
\newblock In {\em Proceedings of the 54th Annual ACM SIGACT Symposium on Theory of Computing}, STOC 2022, page 162–175, New York, NY, USA, 2022. Association for Computing Machinery.

\bibitem[CDP20]{CDP20}
Artur Czumaj, Peter Davies, and Merav Parter.
\newblock Simple, deterministic, constant-round coloring in the congested clique.
\newblock In {\em Proceedings of the 39th Symposium on Principles of Distributed Computing}, PODC '20, page 309–318, New York, NY, USA, 2020. Association for Computing Machinery.

\bibitem[CDP21a]{CDP21lb}
Artur Czumaj, Peter Davies, and Merav Parter.
\newblock Component stability in low-space massively parallel computation.
\newblock In {\em Proceedings of the 2021 ACM Symposium on Principles of Distributed Computing}, PODC'21, page 481–491, New York, NY, USA, 2021. Association for Computing Machinery.

\bibitem[CDP21b]{CDP21mis}
Artur Czumaj, Peter Davies, and Merav Parter.
\newblock Graph sparsification for derandomizing massively parallel computation with low space.
\newblock {\em ACM Trans. Algorithms}, 17(2), 5 2021.

\bibitem[CDP21c]{CDP21col}
Artur Czumaj, Peter Davies, and Merav Parter.
\newblock {Improved Deterministic (Delta+1) Coloring in Low-Space MPC}.
\newblock In {\em Proceedings of the 2021 ACM Symposium on Principles of Distributed Computing}, PODC'21, page 469–479, New York, NY, USA, 2021. Association for Computing Machinery.

\bibitem[CDPP24]{CDP24}
Artur Czumaj, Peter Davies-Peck, and Merav Parter.
\newblock Component stability in low-space massively parallel computation.
\newblock {\em Distributed Computing}, 37(1):35--64, Mar 2024.

\bibitem[CFG{\etalchar{+}}19]{CFG+19}
Yi-Jun Chang, Manuela Fischer, Mohsen Ghaffari, Jara Uitto, and Yufan Zheng.
\newblock {The Complexity of (Delta+1) Coloring in Congested Clique, Massively Parallel Computation, and Centralized Local Computation}.
\newblock In {\em Proceedings of the 2019 ACM Symposium on Principles of Distributed Computing}, PODC '19, page 471–480, New York, NY, USA, 2019. Association for Computing Machinery.

\bibitem[CG89]{CG89}
Benny Chor and Oded Goldreich.
\newblock On the power of two-point based sampling.
\newblock {\em Journal of Complexity}, 5(1):96--106, 1989.

\bibitem[CHPS20]{CPS17}
Keren Censor-Hillel, Merav Parter, and Gregory Schwartzman.
\newblock Derandomizing local distributed algorithms under bandwidth restrictions.
\newblock {\em Distributed Computing}, 33(3):349--366, Jun 2020.

\bibitem[CKPU23]{CKPU23}
M\'{e}lanie Cambus, Fabian Kuhn, Shreyas Pai, and Jara Uitto.
\newblock {Time and Space Optimal Massively Parallel Algorithm for the 2-Ruling Set Problem}.
\newblock In Rotem Oshman, editor, {\em 37th International Symposium on Distributed Computing (DISC 2023)}, volume 281 of {\em Leibniz International Proceedings in Informatics (LIPIcs)}, pages 11:1--11:12, Dagstuhl, Germany, 2023. Schloss Dagstuhl -- Leibniz-Zentrum f{\"u}r Informatik.

\bibitem[EGL{\etalchar{+}}98]{EGL+98}
Guy Even, Oded Goldreich, Michael Luby, Noam Nisan, and Boban Veličković.
\newblock Efficient approximation of product distributions.
\newblock {\em Random Structures \& Algorithms}, 13(1):1--16, 1998.

\bibitem[FGG22]{FGG22}
Manuela Fischer, Jeff Giliberti, and Christoph Grunau.
\newblock {Improved Deterministic Connectivity in Massively Parallel Computation}.
\newblock In Christian Scheideler, editor, {\em 36th International Symposium on Distributed Computing (DISC 2022)}, volume 246 of {\em Leibniz International Proceedings in Informatics (LIPIcs)}, pages 22:1--22:17, Dagstuhl, Germany, 2022. Schloss Dagstuhl -- Leibniz-Zentrum f{\"u}r Informatik.

\bibitem[FGG23]{FGG23}
Manuela Fischer, Jeff Giliberti, and Christoph Grunau.
\newblock Deterministic massively parallel symmetry breaking for sparse graphs.
\newblock In {\em Proceedings of the 35th ACM Symposium on Parallelism in Algorithms and Architectures}, SPAA '23, page 89–100, New York, NY, USA, 2023. Association for Computing Machinery.

\bibitem[GGK{\etalchar{+}}18]{GGKMR18}
Mohsen Ghaffari, Themis Gouleakis, Christian Konrad, Slobodan Mitrovi\'{c}, and Ronitt Rubinfeld.
\newblock Improved massively parallel computation algorithms for mis, matching, and vertex cover.
\newblock In {\em Proceedings of the 2018 ACM Symposium on Principles of Distributed Computing}, PODC '18, page 129–138, New York, NY, USA, 2018. Association for Computing Machinery.

\bibitem[Gha16]{Gha16}
Mohsen Ghaffari.
\newblock An improved distributed algorithm for maximal independent set.
\newblock In {\em Proceedings of the 2016 Annual ACM-SIAM Symposium on Discrete Algorithms (SODA)}, pages 270--277, 2016.

\bibitem[GKU19]{GKU19}
Mohsen Ghaffari, Fabian Kuhn, and Jara Uitto.
\newblock Conditional hardness results for massively parallel computation from distributed lower bounds.
\newblock In {\em 2019 IEEE 60th Annual Symposium on Foundations of Computer Science (FOCS)}, pages 1650--1663, 2019.

\bibitem[Goo99]{G99}
Michael~T. Goodrich.
\newblock Communication-efficient parallel sorting.
\newblock {\em SIAM Journal on Computing}, 29(2):416--432, 1999.

\bibitem[GSZ11]{GSZ11}
Michael~T. Goodrich, Nodari Sitchinava, and Qin Zhang.
\newblock Sorting, searching, and simulation in the mapreduce framework.
\newblock In Takao Asano, Shin-ichi Nakano, Yoshio Okamoto, and Osamu Watanabe, editors, {\em Algorithms and Computation}, pages 374--383, Berlin, Heidelberg, 2011. Springer Berlin Heidelberg.

\bibitem[GU19]{GU19}
Mohsen Ghaffari and Jara Uitto.
\newblock Sparsifying distributed algorithms with ramifications in massively parallel computation and centralized local computation.
\newblock In {\em Proceedings of the 2019 Annual ACM-SIAM Symposium on Discrete Algorithms (SODA)}, pages 1636--1653, 2019.

\bibitem[GV07]{GV7}
Beat Gfeller and Elias Vicari.
\newblock A randomized distributed algorithm for the maximal independent set problem in growth-bounded graphs.
\newblock In {\em ACM SIGACT-SIGOPS Symposium on Principles of Distributed Computing}, 2007.

\bibitem[HPS14]{HPS14}
James Hegeman, Sriram~V. Pemmaraju, and Vivek Sardeshmukh.
\newblock Near-constant-time distributed algorithms on a congested clique.
\newblock In {\em International Symposium on Distributed Computing}, 2014.

\bibitem[KMW16]{KMW16}
Fabian Kuhn, Thomas Moscibroda, and Roger Wattenhofer.
\newblock Local computation: Lower and upper bounds.
\newblock {\em J. ACM}, 63(2), 2016.

\bibitem[KP11]{KP11}
Kishore Kothapalli and Sriram Pemmaraju.
\newblock Distributed graph coloring in a few rounds.
\newblock In {\em Proceedings of the 30th Annual ACM SIGACT-SIGOPS Symposium on Principles of Distributed Computing}, PODC '11, page 31–40, New York, NY, USA, 2011. Association for Computing Machinery.

\bibitem[KP12]{KP12}
Kishore Kothapalli and Sriram~V. Pemmaraju.
\newblock Super-fast 3-ruling sets.
\newblock In {\em Foundations of Software Technology and Theoretical Computer Science}, 2012.

\bibitem[KPP20]{KPP20}
Kishore Kothapalli, Shreyas Pai, and Sriram~V. Pemmaraju.
\newblock {Sample-And-Gather: Fast Ruling Set Algorithms in the Low-Memory MPC Model}.
\newblock In Nitin Saxena and Sunil Simon, editors, {\em 40th IARCS Annual Conference on Foundations of Software Technology and Theoretical Computer Science (FSTTCS 2020)}, volume 182 of {\em Leibniz International Proceedings in Informatics (LIPIcs)}, pages 28:1--28:18, Dagstuhl, Germany, 2020. Schloss Dagstuhl--Leibniz-Zentrum f{\"u}r Informatik.

\bibitem[KSV10]{KSV10}
Howard Karloff, Siddharth Suri, and Sergei Vassilvitskii.
\newblock A model of computation for mapreduce.
\newblock In {\em Proceedings of the 2010 Annual ACM-SIAM Symposium on Discrete Algorithms (SODA)}, pages 938--948, 2010.

\bibitem[Lin92]{Lin92}
Nathan Linial.
\newblock Locality in distributed graph algorithms.
\newblock {\em SIAM Journal on Computing}, 21(1):193--201, 1992.

\bibitem[Lub93]{Lub93}
Michael Luby.
\newblock Removing randomness in parallel computation without a processor penalty.
\newblock {\em Journal of Computer and System Sciences}, 47(2):250--286, 1993.

\bibitem[MPU23]{MPU23}
Yannic Maus, Saku Peltonen, and Jara Uitto.
\newblock Distributed symmetry breaking on power graphs via sparsification.
\newblock In {\em Proceedings of the 2023 ACM Symposium on Principles of Distributed Computing}, PODC '23, page 157–167, New York, NY, USA, 2023. Association for Computing Machinery.

\bibitem[MR95]{MR95}
Rajeev Motwani and Prabhakar Raghavan.
\newblock {\em Randomized algorithms}.
\newblock Cambridge university press, 1995.

\bibitem[MSN94]{MNN94}
Rajeev Motwani, Joseph {(Seffi) Naor}, and Moni Naor.
\newblock The probabilistic method yields deterministic parallel algorithms.
\newblock {\em Journal of Computer and System Sciences}, 49(3):478--516, 1994.
\newblock 30th IEEE Conference on Foundations of Computer Science.

\bibitem[Now21]{Now21}
Krzysztof Nowicki.
\newblock A deterministic algorithm for the mst problem in constant rounds of congested clique.
\newblock In {\em Proceedings of the 53rd Annual ACM SIGACT Symposium on Theory of Computing}, page 1154–1165, New York, NY, USA, 2021. Association for Computing Machinery.

\bibitem[PP22]{PP22}
Shreyas Pai and Sriram~V. Pemmaraju.
\newblock Brief announcement: Deterministic massively parallel algorithms for ruling sets.
\newblock In {\em Proceedings of the 2022 ACM Symposium on Principles of Distributed Computing}, PODC'22, page 366–368, New York, NY, USA, 2022. Association for Computing Machinery.

\bibitem[Rag88]{Rag88}
Prabhakar Raghavan.
\newblock Probabilistic construction of deterministic algorithms: Approximating packing integer programs.
\newblock {\em Journal of Computer and System Sciences}, 37(2):130--143, 1988.

\bibitem[SEW13]{SEW13}
Johannes Schneider, Michael Elkin, and Roger Wattenhofer.
\newblock Symmetry breaking depending on the chromatic number or the neighborhood growth.
\newblock {\em Theor. Comput. Sci.}, 509:40--50, 2013.

\end{thebibliography}

\end{document}